\def\@endtheorem{\endtrivlist}
\def\orcidID#1{\smash{\href{http://orcid.org/#1}{\protect\raisebox{-1.25pt}{\protect\includegraphics{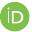}}}}}
\keywords{dependent type theory, higher inductive types, quotient types,
well-founded relation, weakly initial set of covers, topos theory}
\begin{document}

% better indentation consistency, must occur after \begin{document}
\setlength\mathindent{\parindent}
\newlength\originalparindent%
\setlength\originalparindent{\parindent}
\setlist[itemize]{labelsep=*,leftmargin=1\mathindent}
\setlist[enumerate]{labelsep=*,leftmargin=1\mathindent}
\setlist[description]{leftmargin=1\mathindent}
\renewenvironment{quote}{%
  \list{}{%
    \leftmargin\parindent%
    \rightmargin\leftmargin%
  }
  \item\relax
}
{\endlist}

\title[Quotients, inductive types, and quotient inductive types]{%
    \texorpdfstring{Quotients, inductive types, \\
        \& quotient inductive types\hspace{0.21em}\phantom{,}
    }{Quotients, inductive types, and quotient inductive types}
}
%\titlecomment{{\lsuper*}OPTIONAL comment concerning the title, \eg,
%  if a variant or an extended abstract of the paper has appeared elsewhere.}

\author[M.~P.~Fiore]{Marcelo P.~Fiore~\texorpdfstring{\protect\orcidID{0000-0001-8558-3492}}} % chktex 8
\address{Department of Computer Science and Technology\\
University of Cambridge\\
United Kingdom}
\email{\{marcelo.fiore, andrew.pitts, s.c.steenkamp\}@cl.cam.ac.uk}
%\urladdr{name3@url3\quad\rm{(optionally, a web-page can be specified)}}  %optional
%\thanks{thanks 1, optional.}	%optional

\author[A.~M.~Pitts]{Andrew M.~Pitts~\texorpdfstring{\protect\orcidID{0000-0001-7775-3471}}} % chktex 8
%\address{Same as Author 1}
%\email{andrew.pitts@cl.cam.ac.uk}
%\urladdr{name3@url3\quad\rm{(optionally, a web-page can be specified)}}  %optional
%\thanks{thanks 1, optional.}	%optional

\author[S.~C.~Steenkamp]{S.~C.~Steenkamp\texorpdfstring{\textsuperscript{(\protect\Letter)}~\protect\orcidID{0000-0003-3105-4098}}{}} % chktex 8
%\address{Same as Author 1}
%\email{s.c.steenkamp@cl.cam.ac.uk}
%\urladdr{name3@url3\quad\rm{(optionally, a web-page can be specified)}}  %optional
%\thanks{thanks 1, optional.}	%optional

%\date{\today\smallgitversionbelow}

%% the abstract has to PRECEDE the command \maketitle:
%% be sure not to issue the \maketitle command twice!

\begin{abstract}
  This paper introduces an expressive class of indexed
  quotient-inductive types, called QWI types, within the framework of
  constructive type theory. They are initial algebras for indexed
  families of equational theories with possibly infinitary operators
  and equations. We prove that QWI types can be derived from quotient
  types and inductive types in the type theory of toposes with natural
  number object and universes, provided those universes satisfy the
  Weakly Initial Set of Covers (WISC) axiom. We do so by constructing
  QWI types as colimits of a family of approximations to them defined
  by well-founded recursion over a suitable notion of size, whose
  definition involves the WISC axiom. We developed the proof and
  checked it using the Agda theorem prover.
\end{abstract}

\maketitle

% use \includeonly{tex/chapter1} \include{tex/chapter1} \inlcude{tex/chapter2}
% to process only chapter1 for faster build times (i.e. if you are not
% modifying chapter2).
\section{Introduction}%
\label{sec:introduction}

Inductive types are an essential feature of many type theories and the
proof assistants and programming languages that are based upon
them. Homotopy Type Theory \parencite{HoTT} introduces a powerful
extension to the notion of inductive type: higher inductive
types~(HITs). To define an ordinary inductive type one declares the
ways in which its elements are constructed, and these constructions
may depend inductively on smaller elements. To define a HIT one not
only declares element constructors, but also declares equality
constructors valued in identity types (possibly iterated ones),
specifying how the element constructors~(and possibly the equality
constructors) are related. In the language of homotopy theory, these
correspond to points, paths, surfaces, and so on.

In this paper, rather than using HoTT, we work in the simpler setting
of Extensional Type Theory~\parencite{martin-lof1982constructive},
where any propositional equality is reflected as a definitional
equality. However, our results also hold for an intensional type
theory with the uniqueness of identity proofs axiom (UIP), as
demonstrated by our Agda development which can be found
at~\parencite{fiore2021agdacode}.  In either case identity types are
trivial in dimensions higher than one, so that any two proofs of
identity are equal. Nevertheless, as \textcite{altenkirch2016type}
point out, HITs are still useful in such a one-dimensional
setting. They introduced the term \emph{quotient inductive type} (QIT)
for this truncated form of HIT\@.

One specific advantage of QITs over inductive types is that they
provide a natural way to study universal algebra in type theory, since
general algebraic theories cannot be represented in an
equation-free way just using inductive notions
\parencite{malbos2016homological}.  As an example, the free
commutative monoid on carrier \(X\) is given by the quotient inductive
type of finite multisets\footnotemark{}, \(\Bag\,X\), with
constructors:
\begin{equation}\label{bag_qit}
    \begin{aligned}
        {} [] &: \Bag\,X
        \\ \_{∷}\_ &: X → \Bag\,X → \Bag\,X
        \\ \swap &: ∏_{x,y : X} ∏_{\zs : \Bag\,X} x ∷ y ∷ \zs = y ∷ x ∷ \zs
    \end{aligned}
\end{equation}
This can be seen as the element constructors for lists ($[]$ and $∷$),
along with an equality constructor \swap{} that transposes two
elements and thereby allows all reorderings
of a list to be identified. Notice that the \swap{} constructor lands
in the identity type on \(\Bag\,X\) (just written as $\_{=}\_$),
making it an \emph{equation} constructor, compared to the
\emph{element} constructors, \([]\) and \(\_{∷}\_\), which are familiar
from ordinary inductive type definitions. There is no need to include
a constructor for the congruence property of $\_{∷}\_$ since that is
automatic from the congruence properties of identity types.

\footnotetext{Other presentations of multisets are available, aside
  from the two presentations given here. Incidentally, the algebra of
  monoids has an equation-free
  presentation~\parencite[section~23]{KellyGM:unittc},
  \parencite{KellyGM:oncdt} – the algebra of Lists – and this is why
  no equation for associativity is needed in either definition of
  multisets.}

An alternative presentation of multisets, whose equations may be
easier to work with, is:%
\begin{equation}%
    \label{alt-bag_qit}
    \begin{aligned}
        {} [] &: \Bag'\,X
        \\ \_{∷}\_ &: X → \Bag'\,X → \Bag'\,X
        \\ \comm &: ∏_{x,y : X} ∏_{\as,\bs,\cs : \Bag'\,X}
            \as = y ∷ \cs → x ∷ \cs = \bs → x ∷ \as = y ∷ \bs
    \end{aligned}
\end{equation}
Notice that the \comm{} equality constructor is \emph{conditional} on
two equations, \(\as = y ∷ \cs\) and \(x ∷ \cs = \bs\), in order to
deduce the final equation \(x ∷ \as = y ∷ \bs\). The QWI-types
introduced in this paper are not able to encode such conditional QITs
(at least, not obviously so); see the discussion in the Conclusion.

Given the parameter $X$, the types $\Bag\,X$ and $\Bag'\,X$ are single
quotient-inductive types, rather than indexed families defined
quotient-inductively. As an example of the latter, consider the go-to
example of an inductively defined family, the vector type
(length-indexed lists). This has a commutative variant,
\(\CommVec\,X\,n\) for \(n : ℕ\), which makes it an example of an
indexed QIT\@. It has constructors:
\begin{equation}%
  \label{abvec_qit}
  \begin{aligned}
    {} [] &: \CommVec\,X\,0
    \\ \_{∷}\_ &: X → ∏_{i : ℕ}\CommVec\,X\,i → \CommVec\,X\,(i + 1)
    \\ \swap &: ∏_{x,y : X} ∏_{i : ℕ} ∏_{\zs : \CommVec\,X\,i}
    x ∷ y ∷ \zs = y ∷ x ∷ \zs
  \end{aligned}
\end{equation}
Our final example of a QIT is unordered countably-branching trees
over \(X\), called \(\infTree\,X\), with constructors:
\begin{equation}%
  \label{infTree_qit}
  \begin{aligned}
    {} \leaf &: \infTree\,X
    \\ \node &: X → (ℕ → \infTree\,X) → \infTree\,X
    \\ \perm &: ∏_{x : X} ∏_{b : ℕ → ℕ} ∏_{b' : \isIso\,b}
    ∏_{f : ℕ → \infTree\,X} \node\,x\,f = \node\,x\,(f ∘ b)
  \end{aligned}
\end{equation}
where elements of \(\isIso\,b\) witness that \(b\) is a bijection. (As
with \CommVec{}, one could similarly consider a depth-indexed variant
of \(\infTree\).) The significance of this example is that the
$\node\,x\,\_$ and $\perm\,x\,b\,b'\,\_$ constructors have infinite
arity, making this an example of an \emph{infinitary} QIT\@.  This type
is one of the original motivations for considering QITs
\parencite{altenkirch2016type}, since they can enable constructive
versions of structures that classically use non-constructive choice
principles, as we explain next.

The examples of QITs in~\eqref{bag_qit}--\eqref{abvec_qit} only
involve element constructors of \emph{finite} arity. For example in~\eqref{bag_qit}, $[]$ is nullary and for each $x,y:X$, $x∷\_$ and
$\swap\,x\,y\,\_$ are unary. Consequently $\Bag\,X$ is isomorphic to
the type obtained from the ordinary inductive type of finite lists
over $X$ by quotienting by the congruence generated by $\swap$.  By
contrast,~\eqref{infTree_qit} involves element and equality
constructors with countably infinite arity. So if one first forms the
ordinary inductive type of \emph{ordered} (or planar)
countably-branching trees (by dropping the equality constructor
\perm{} from the declaration) and then quotients by a suitable
relation to get the equalities specified by \perm, one appears to need
the axiom of countable choice to be able to lift the \node{} element
constructor to the quotient
\parencite[Section~2.2]{altenkirch2016type}.

The construction of the Cauchy reals as a higher inductive-inductive
type~\parencite[Section~11.3]{HoTT} provides a similar, but more
complicated example where use of countable choice is avoided. So it
seems that without some form of choice principle, quotient inductive
types are strictly more expressive than the combination of ordinary
inductive types with quotient types. Indeed,
\textcite[Section~9]{lumsdaine2019semantics} turn an infinitary
equational theory due to \textcite[Section~9]{BlassA:worfac} into a
higher-inductive type that cannot be proved to exist in ZF set theory
without the Axiom of Choice. In this paper we show that in fact a much
weaker and constructively acceptable form of choice is sufficient for
constructing indexed quotient inductive types. This is the
\emph{Weakly Initial Set of Covers} (WISC) axiom, originally called
the ``Type Theoretic Collection Axiom'' \textsc{TTCA}$_f$ by
\textcite{StreicherT:reamc} and (in the context of constructive set
theory) the ``Axiom of Multiple Choice''
by \textcite{vandenberg2014axiom}.  WISC is constructively acceptable
in that it is known to hold in the internal logic of a wide range of
toposes~\parencite{vandenberg2014axiom} (for example, all Grothendieck
and all realizability toposes over the topos of classical sets),
including those commonly used in the semantics of Type
Theory. \Cref{sec:wisc} reviews WISC and our motivation for using it
herein.

We make two contributions:
\begin{itemize}
\item First we define a class of indexed quotient inductive types
  called \emph{QWI-types} and give elimination and computation rules
  for them (\Cref{sec:qw-types}). The usual W-types of
  \textcite{martin-lof1982constructive} are inductive types giving the
  algebraic terms over a possibly infinitary signature. One specifies
  a QWI-type by giving a family of equations between such terms. So
  such types give initial algebras for (families of) possibly
  infinitary algebraic theories. They can encode a very wide range of
  examples of, possibly infinitary, indexed quotient inductive types
  (\Cref{sec:qit_encoding}). In classical set theory with the Axiom of
  Choice, QWI-types can be constructed simply as quotients of the
  underlying Indexed W-type---hence the name.

\item Second, our main result (\Cref{thm:main}) is that with only
  WISC it is still possible to construct QWI-types from inductive
  types and quotients in the constructive type theory of toposes, but
  not simply by quotienting a W-type.  Instead, quotienting is
  interleaved with an inductive construction. To make sense of this
  and prove that the resulting type has the required universal
  property, we construct the type as the colimit of a family of
  approximations to it defined by well-founded recursion over a
  suitable notion of \emph{size} (a constructive version of what
  classically would be accomplished with a sequence of transfinite
  ordinal length). Our notion of size is developed in
  \Cref{sec:size}. Sizes are elements of a W-type equipped with the
  \emph{plump}~\parencite{TaylorP:intso} well-founded order; and WISC
  allows us to make the W-type big enough that a given polynominal
  endofunctor preserves size-indexed colimits
  (\Cref{cor:poly-cocontinuity}). This fact is then applied in
  \Cref{sec:construction} to form QWI-types as size-indexed colimits.
\end{itemize}

\subsection*{Note}
This paper is a greatly revised and expanded version of our earlier
paper \parencite{fiore2020constructing}, which introduced QW-types
(see~\Cref{def:qwt} which is the non-indexed version
of~\Cref{def:qwit}). There we made use of \emph{sized
types}~\parencite{AbelA:typbti} as they are implemented in the Agda
theorem prover~\parencite{Agda261}. Though the results of that paper
were formalised in Agda version 2.5.2, unfortunately, in the current
version of Agda, 2.6.2, sized types are logically unsound. In this
paper, not only do we consider the more expressive indexed form of
QW-types, namely QWI-types, we also put the construction of QW-types
on a firm semantic footing by avoiding Agda-style sized types (and
positivity assumptions on quotient types,
see~\Cref{motivatingtheconstruction}) in favour of a well-founded
notion of size built from WISC\@. The Agda formalization of the results
in this paper is available at~\parencite{fiore2021agdacode}.

\FloatBarrier
\section{Type theory}%
\label{sec:type-theory}

The results in this paper are proved in a version of Extensional Type
Theory~\parencite{martin-lof1982constructive} that is an internal
language for toposes~\parencite{JohnstonePT:skeett} with natural
numbers object and universes~\parencite{StreicherT:unit}. We use this
type theory in an informal way, in the style and notation of the HoTT
Book~\parencite{HoTT}. Thus dependent function types are written
$∏_{x:A}B\,x$ and non-dependent ones as either $A→ B$ or $B^A$.
Dependent product types are written $∑_{x:A}B\,x$ and non-dependent
ones as $A × B$. Coproduct types are written $A+B$ (with injections
$ι₁:A → A+B$ and $ι₂:B → A+B$); and finite types are written $𝟘$ (with
no elements), $𝟙$ (with a single element $0$), $𝟚$ (with two elements
$0,1$), etc. The identity type for $x:A$ and $y:A$ is $x =_A y$, or
just $x = y$ when the type $A$ is known from the context. When we need
to refer to the judgement that $x$ and $y$ are definitionally equal,
we write $x ≡ y$. Unlike in the HoTT Book~\parencite{HoTT}, the type
$x = y$ is an extensional identity type and so it is inhabited iff $x≡
y$ holds. In particular, proofs of identity are unique when they exist
and so it makes sense to use McBride's heterogeneous form of identity
type~\parencite{mcbride1999thesis} wherever possible.  Given $x:A$ and
$y:B$, we denote the heterogeneous identity type by $x\heq y$; thus
this type is inhabited iff $A ≡ B$ and $x ≡ y$.

The type theory has universes of types, $𝓤_0 : 𝓤_1 : 𝓤_2,\ldots$ which
we write just as $𝓤$ when the universe level is immaterial. We use
Russell-style universes (there is no syntactic distinction between an
element $A:𝓤$ and the corresponding type of its elements). Influenced
by Agda (see below) and unlike the HoTT Book~\parencite{HoTT}, we do
not assume universes are cumulative, but instead use Agda-style
closure under $\prod$-types: if $A:𝓤_i$ and $B:A→𝓤_j$, then
$∏_{x:A} B\,x : 𝓤_{\max i j}$ (and similarly for $\sum$-types). Since
we only consider toposes that have a natural numbers object, we can
assume the universes are closed under forming not just
W-types~\parencite[Proposition~3.6]{MoerdijkI:weltc}, but also all
inductively defined indexed families of
types~\parencite{GambinoN:welftd}. Indexed W-types are considered in
more detail in the next section.

The lowest universe contains an impredicative universe $\Prop$ of
propositions, corresponding to the subobject classifier in a
topos. $\Prop$ contains the identity types, is closed under
intuitionistic connectives ($∧$, $∨$, $→$, $↔$, etc.) and quantifiers
($∀(x:A).φ(x)$ and $∃(x:A).φ(x)$, for $A$ in any universe), and satisfies
propositional extensionality:
\begin{equation}%
  \label{eq:propext}
   \propext : ∀(p,q:\Prop).\, (p ↔ q) → {p = q}
 \end{equation}
 Being an extensional type theory, we also have function
 extensionality:
\begin{equation}%
  \label{eq:funext}
  \funext : ∀\left(f,g :∏_{x:A}B\,x\right).\, (∀(x:A).\, f\,x = g\, x)
  → f = g
\end{equation}
Note that like the universes $𝓤_i$, $\Prop$ is also a Russell-style
universe, in that we do not make a notational distinction between a
proposition $p:\Prop$ and the type of its proofs. Given $A:𝓤$ and
$φ:A→ \Prop$, we regard the comprehension type $\{x:A ∣ φ\,x\}$ in $𝓤$
as synonymous with the dependent product $∑_{x:A}\,φ\,x$.

Toposes have coequalizers and effective
epimorphisms~\parencite[A2.4]{JohnstonePT:skeett}. Correspondingly the
type theory contains quotient types, with notation as in
\Cref{fig:quot} (recall that $\heq$ stands for heterogeneous
identity).  These can be constructed in the usual way via equivalence
classes, using $\Prop$'s impredicative quantifiers and the fact that
toposes satisfy unique choice:~% fix bad vertical spacing
\begin{equation}%
  \label{eq:uniquechoice}
  \uniquechoice : (∃(x:A).∀(y:A).x = y) → A
\end{equation}
Thus $\uniquechoice$ is a function mapping proofs that $A$ has exactly
one element to a name for that one element. From $\uniquechoice$ it
follows that given $A:𝓤$, $B:𝓤^A$ and $φ:∏_{x:A}(B\,x → \Prop)$, from
a proof of $∀(x:A)∃!(y:B\,x).\;φ\,x\,y$ we get there is a unique
$f:∏_{x:A} B\,x$ such that $∀(x:A).\; φ\,x\,(f\,x)$.

\begin{figure}
  \begin{mdframed}
    \begin{itemize}
    \item  Given $A:𝓤$ and $R:A → A → \Prop$, we have:
      \begin{align*}
        &A/R : 𝓤\\
        &[\_]_R: A → A/R\\
        &\mathsf{qeq}_R : ∀(x,y:A)·\,R\,x\,y → [x]_R = [y]_R
      \end{align*}
    \item Furthermore, given $B:A/R → 𝓤$, $f: ∏_{x:A} B([x]_R)$, and
        \(p : ∀(x,y:A)·\, R\,x\,y → f\,x \heq f\,y\), we have:
      \begin{align*}
        &\mathsf{qelim}_R\,B\,f\,p : ∏_{z:A/R} B\,z\\
        &\mathsf{qcomp}_R\,B\,f\,p :
          ∀(x:A)·\,\mathsf{qelim}\,B\,f\,[x]_R = f\,x
      \end{align*}

    \item Quotients of equivalence relations are effective: writing
      $\mathsf{ER}_R$ for the proposition that $R$ is reflexive,
      symmetric and transitive, we have:
      \begin{align*}
        &\mathsf{qeff}_R : \mathsf{ER}_R → ∀(x,y:A)·\, [x]_R = [y]_R  →
          R\,x\,y
      \end{align*}
    \end{itemize}
  \end{mdframed}
  \caption{Quotient types}%
  \label{fig:quot}
\end{figure}

\subsection*{Agda development}

We have developed and checked a version of the results in this paper
using the Agda theorem prover~\parencite{Agda261}. In particular our Agda
development gives the full details of the construction of the indexed
version of QW-types, whereas in the paper, for readability, we only give
the construction in the non-indexed case (\cref{sec:construction}).

Being intensional and predicative, the type theory provided by Agda is
weaker than the one described above, but can be soundly interpreted in
it. One has to work harder to establish the results with Agda, since
two expressions that are definitionally equal in Extensional Type
Theory may not be so in Agda and hence one has to produce a term in a
corresponding identity type to prove them equal. On the other hand,
when a candidate term is given, Agda can decide whether or not it is a
correct proof, because validity of the judgements of the type theory
it implements is decidable\footnote{though not always feasibly so}, in
contrast with the situation for Extensional Type
Theory~\parencite{HofmannM:extcit}. Our development is also made
easier by extensive use of the predicative universes of
proof-irrelevant propositions that feature in recent versions of
Agda. Not only are any two proofs of such propositions definitionally
equal, but inductively defined propositions (such as the well-founded
ordering on sizes used in \Cref{sec:construction}) can be eliminated
in proofs using dependent pattern matching, which is a very great
convenience.  We still need these propositions to satisfy the
extensionality~\eqref{eq:propext} and unique
choice~\eqref{eq:uniquechoice} properties, so we add them as
postulates in Agda. Also, the impredicative construction of quotient
types is not available, so we get quotients as in \Cref{fig:quot} by
postulating them and using a user-declared rewrite to make their
computation rule a definitional equality. Although function
extensionality~\eqref{eq:funext} is not provable in Agda's core type
theory, it becomes so once one has such quotient types. Our Agda
development is available at~\parencite{fiore2021agdacode}.

%%% Local Variables:
%%% mode: latex
%%% TeX-master: "../qw-journal"
%%% End:

\FloatBarrier
\section{Indexed polynomial functors and equational systems}%
\label{sec:qw-types}

In this section we introduce a class of indexed quotient inductive
types, called \emph{QWI-types}, which are indexed families of free
algebras for possibly infinitary equational theories. To begin with
we describe the simpler, non-indexed case (\emph{QW-types}) and then
treat the general, indexed case in \cref{sec:qwi-types}.

\subsection{W-types}%
\label{sec:w-types}

We recall some facts about types of well-founded trees,
the W-types of \textcite{martin-lof1982constructive}. We take
\emph{signatures} (also known as
\emph{containers}~\parencite{abbott2005containers}) to be elements of
the dependent product
\begin{equation}\textstyle%
  \label{eq:qw-types-2}
  \Sig ≝ ∑_{A:𝓤}(A → 𝓤)
\end{equation}
So a signature is given by a pair \(Σ=(A,B)\) consisting of a type
\(A:𝓤\) and a family of types \(B:A → 𝓤\). Each such signature
determines a polynomial endofunctor
\parencite{GambinoN:welftd,abbott2005containers}
\(S_Σ:𝓤→𝓤\) whose value at
$X:𝓤$ is the following dependent product
\begin{equation}\textstyle%
  \label{def_S_endofunctor}
  S_Σ(X) ≝ ∑_{a:A}(B\,a → X)
\end{equation}
and whose action on a function $f:X→Y$ in $𝓤$ is the function
$S_Σ f:S_Σ(X)→ S_Σ(Y)$ where
\begin{equation}%
  \label{def_S_endofunctor-action}
  S_Σ f\,(a,b) ≝ (a, f∘ b) \qquad (a:A, b:B\,a → X)
\end{equation}
An \emph{\(S_Σ\)-algebra} is by definition an element of the
dependent product
\begin{equation}\textstyle%
  \label{eq:qw-types-7}
  \Alg_Σ ≝ ∑_{X:𝓤}(S_Σ(X) → X)
\end{equation}
\(S_Σ\)-algebra morphisms \((X,α)→(X',α')\) are given by functions
\(h:X→ X'\) together with an element of the type
\begin{equation}%
  \label{eq:qw-types-8}
  \isHom_{α,α'}\,h ≝ ∏_{a:A}∏_{b:B\,a → X} (α'(a, h ∘ b) = h(α(a, b)))
\end{equation}
Then the W-type \(🅆_Σ\) determined by a signature \(Σ\) is the
underlying type of an initial object in the evident category of
\(S_Σ\)-algebras. More generally, \textcite{dybjer1997representing}
shows that the initial algebra of any non-nested, strictly positive
endofunctor on $𝓤$ is given by a W-type; and
\textcite{abbott2005containers} extend this to the case with nested
uses of W-types as part of their work on containers. (These proofs
take place in extensional type
theory~\parencite{martin-lof1982constructive}, but work just as well
in intensional type theory with uniqueness of identity proofs and
function extensionality, which is what we use in Agda.)

More concretely, given a signature $Σ=(A,B)$, if one thinks of
elements $a:A$ as names of operation symbols whose (not necessarily
finite) arity is given by the type $B\,a:𝓤$, then the elements of the
W-type \(🅆_Σ\) represent the closed algebraic terms~(i.e.~well-founded
trees) over the signature. From this point of view it is natural to
consider not only closed terms solely built up from operations, but
also open terms additionally built up with variables drawn from some
type $V$. As well as allowing operators of possibly infinite arity, we
also allow terms involving possibly infinitely many variables (example~\eqref{infTree_qit} involves such terms).  Categorically, the type
$T_Σ(V)$ of such open terms is the free $S_Σ$-algebra on $V$ and is
another W-type, for the signature obtained from $Σ$ by adding the
elements of $V$ as nullary operations. Nevertheless, it is convenient
to give a direct inductive definition: the value of \(T_Σ : 𝓤 → 𝓤\) at
some some \(V : 𝓤\), is the inductively defined type with constructors
\begin{equation}\label{eq:Tconstructors}
  \begin{aligned}
    η &: V → T_Σ\,V\\
    σ &: S_Σ (T_Σ\,V) → T_Σ\,V
  \end{aligned}
\end{equation}
Given an \(S_Σ\)-algebra \((X, α) : \Alg_Σ\) and a function \(f : V →
X\), the unique morphism of \(S_Σ\)-algebras from the free
\(S_Σ\)-algebra on \(V\) to \((X, α)\) that extends \(f\), \(\_ ⋙ f :
(T_Σ(V),σ) → (X,α)\), has underlying function \(T_Σ(V) → X\) mapping
each \(t : T_Σ(V)\) to the element \(t ⋙ f\) defined as follows by
recursion on the structure of \(t\):
\begin{equation}\label{def_bind}
    \begin{aligned}
        η\,x ⋙ f &≝ f\,x \\
        σ\,(a, b) ⋙ f &≝ α\,(a, λ\,x · b\,x ⋙ f)
    \end{aligned}
\end{equation}
As the notation suggests, \(⋙\) is the Kleisli lifting operation
(“bind”) for a monad structure on \(T_Σ\). Indeed, \(T_Σ\) is the free
monad on the endofunctor \(S_Σ\) and in particular the functorial
action of \(T_Σ\) on $f:V→V'$ yields the function
\(T_Σ f : T_Σ(V)→ T_Σ(V')\) given by
\begin{equation}%
  \label{def_Taction}
  T_Σ f\, t ≝ t  ⋙ (η ∘ f) \qquad (t : T_Σ(V))
\end{equation}

\subsection{QW-types}%
\label{sec:qwt}

The notion of \emph{QW-type} that we introduce in this section is
obtained from that of a W-type by considering not only the algebraic
terms over a given structure, but also equations between terms. To
code equations we use a type-theoretic rendering of a
categorical notion of equational system introduced by
\citeauthor{fiore2009construction}, referred to as \emph{term
  equational system}~\parencite[Section~2]{fiore2009construction} and
as \emph{monadic equational
  system}~\parencite[Section~5]{fiore2013equational}, here
instantiated to the special case of free monads on signatures.

\begin{definition}%
  \label{def:syse}
  A \emph{system of equations} over a signature \(Σ : \Sig\) is specified by:
  \begin{itemize}
  \item A type \(E : 𝓤\), whose elements \(e :
    E\) can be thought of as names for each equation.
  \item  A function \(V : E → 𝓤\). For each equation \(e:E\), the type
    \(V\,e : 𝓤\) represents the variables used in the
    equation \(e\).
  \item For each equation \(e:E\), two elements called \(l\,e\) and
    \(r\,e\) of type \(T_Σ(V\,e)\), which is the  free \(S_Σ\)-algebra on
    the variables \(V\,e\). So \(l\,e\) and \(r\,e\) can be thought
    of as the abstract syntax trees of terms with some leaves being
    free variables drawn from \(V\,e\).
  \end{itemize}
  So systems of equations are elements of the dependent product
  \begin{equation}
    \Syseq_Σ ≝ ∑_{E : 𝓤} ∑_{V : E → 𝓤}
    \bigg(∏_{e : E} T_Σ(V\,e)\bigg)
    × \bigg(∏_{e : E} T_Σ(V\,e)\bigg)
  \end{equation}
  An \(S_Σ\)-algebra \(α : S_Σ(X) → X\) \emph{satisfies} the system of equations
  \(ε ≡ (E,V,l,r) : \Syseq_Σ\) if there is a proof of
  \begin{equation}\label{def_Sat}
    \Sat_{α,ε}(X) ≝ ∀(e : E).∀(ρ : V\,e → X).\; (l\,e ⋙ ρ) = (r\,e ⋙ ρ)
  \end{equation}
\end{definition}
The category-theoretic view of QW-types is that they are simply
\(S_Σ\)-algebras that are initial among those satisfying a given
system of equations:

\begin{definition}[\textbf{QW-type}]%
  \label{def:qwt}
  A \emph{QW-type} for a signature \(Σ ≡ (A, B) : \Sig\) and a system of
  equations \(ε ≡ (E, V, l, r) : \Syseq_Σ\) is given by a type \(\QW :
  𝓤\) equipped with an \(S_Σ\)-algebra structure and a proof that it satisfies
  the equations. Thus there are functions
  \begin{align}
    \qwintro &: S_Σ(\QW) → \QW\label{def_qwintro}\\
    \qwequate &: \Sat_{\qwintro,ε}(\QW)\label{def_qwequate}
                \intertext{%
                together with functions that witness that it is an initial such
                algebra:
                }
                \qwrec &: ∏_{X : 𝓤} ∏_{α : S_Σ\,X → X} \Sat_{α,ε}(X)
                         → \QW → X\label{def_qwrec}\\
    \qwrechom &: ∏_{X : 𝓤}∏_{α : S_Σ\,X → X}
                ∏_{p : \Sat_{α,ε}(X)} \isHom\,(\qwrec\,X\,α\,p)\label{def_qwrechom} \\
    \qwuniq &:
              \prod_{X : 𝓤}∏_{α : S_Σ\,X → X}∏_{p : \Sat_{α,ε}(X)}∏_{f : \QW → X}
              \isHom\,f → \qwrec\,X\,α\,p = f\label{def_qwuniq}
  \end{align}
\end{definition}

\begin{remark}[\textbf{Free algebras}]%
  \label{remarkFreeAlgebras}
  \Cref{def:qwt} defines QW-types as initial algebras for
  systems of equations $ε$ over signatures $Σ$.  More generally, the
  \emph{free} $(Σ,ε)$-algebra on a type $X:𝓤$ is a type
  $F_{Σ,ε}\,X : 𝓤$ equipped with an $S_Σ$-algebra structure
  $S_Σ(F_{Σ,ε}\,X) → F_{Σ,ε}\,X$ satisfying the system of equations
  $ε$ and an inclusion of generators $η_X:X → F_{Σ,ε}\,X$ which is
  universal among such data. Initial algebras are the $X=𝟘$ special
  case of free algebras. But once one has them, one also has free
  algebras, by change of signature: $F_{Σ,ε}\,X$ is the QW-type for
  the signature $Σ_X$ and system of equations $ε_X$ defined as
  follows. If $Σ=(A,B)$, then $Σ_X=(X+A,B_X)$ where $B_X: X+A → 𝓤$
  satisfies $B_X(ι₁\,x) = 𝟘$ and $B_X(ι₂\,a) = B\,a$. And if
  $ε=(E,V,l,r)$, then $ε_X=(E,V,l_X,r_X)$ where for each $e:E$,
  $l_X\,e = (l\,e ⋙ η)$ and $r_X\,e = (r\,e ⋙ η)$ (using the
  $S_Σ$-algebra structure $s$ on $T_Σ(V\,e)$ given by
  $s(a,b) = σ(ι₂\,a,b)$).
\end{remark}
The definitions of \(S_Σ\) in~\eqref{def_S_endofunctor} and
\(\Sat_{α,ε}\) in~\eqref{def_Sat} suggest that QW-types are an
instance of the notion of quotient-inductive type
\parencite{altenkirch2016type}, with \qwintro{} as the element
constructor and \qwequate{} as the equality constructor. To show that
QW-types are indeed quotient-inductive types, they need
to have the requisite dependently-typed elimination and
computation properties\footnotemark{} for these elements and equality
constructors. We show that these follow
from~\eqref{def_qwrec}–\eqref{def_qwuniq} and function extensionality. To state
this proposition we need a dependent version of the bind operation~\eqref{def_bind}. %
\footnotetext{In this paper we work with extensional type theory, so
  the computation property,~\eqref{qwiElimComp}, is also a
  definitional equality. In our Agda development we work in
  intensional type theory, so there we only establish the
  computation property up to propositional equality; using the
  terminology of \textcite{shulman2018brouwer}, those are \emph{typal}
  indexed quotient-inductive types.
}
For each ``motive'' \(P\) and induction step \(p\),
\begin{equation}
    \begin{aligned}\label{elim_motive_and_step}
        P &: \QW → 𝓤 \\
        p &:  ∏_{a : A} ∏_{b : B\,a → \QW}
        \bigg(\prod_{x : B\,a} P(b\,x)\bigg) → P(\qwintro(a, b))
    \end{aligned}
\end{equation}
as well as a type \(X : 𝓤\), a substitution \(ρ : \left(X → ∑_{x
: \QW} P\, x\right)\), and term \(t : T_Σ\,X\), we have an
element \(\lift_{P,X}\,p\,ρ\,t : P\,(t ⋙( π₁ ∘ ρ))\) defined by recursion on
the structure of \(t\):
\begin{equation}%
  \label{eq:lift}
    \begin{array}{lcl}
        \lift_{P,X}\,p\,ρ\,(η\,x) &≝& π₂\,(ρ\,x) \\
        \lift_{P,X}\,p\,ρ\,(σ\,(a, b)) &≝&
            p\,a\,(λ\,x · b\,x ⋙ (π₁ ∘ ρ))\,((\lift_{P,X}\,p\,ρ) ∘ b)
    \end{array}
\end{equation}
Note that the substitution \(ρ\) gives terms in the “dependent telescope”
\(∑_{x:\QW} P\,x\), which will be written as \(P'\).

\begin{proposition}
  For a QWI-type as defined above, given \(P\) and \(p\) as in~\eqref{elim_motive_and_step}, and a term \(p_\resp\) of type
  \begin{equation}\label{ind_step_respects_equations}
    ∏_{e : E}
    ∏_{ρ : V\,e → ∑_{x : \QW} P\,x}
    \lift_{P,X}\,p\,ρ\,(l\,e) == \lift_{P,X}\,p\,ρ\,(r\,e)
  \end{equation}
  there are elimination and computation terms
  \begin{equation}\label{qwiElimComp}
    \begin{aligned}
      \qwelim &: ∏_{x : \QW} P\,x \\
      \qwcomp &: ∏_{a : A} ∏_{b : B\,a → \QW}
      \qwelim\,(\qwintro\,(a,b)) = p\,a\,b\,(\qwelim ∘ b)
    \end{aligned}
  \end{equation}
\end{proposition}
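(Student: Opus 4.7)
The plan is to derive dependent elimination from the simple recursor $\qwrec$ applied to the total space $P' := \sum_{x:\QW} P\,x$, using the standard ``displayed algebra'' technique. First I equip $P'$ with an $S_Σ$-algebra structure
\[
  α'(a, c) := \bigl(\qwintro(a, π_1 ∘ c),\; p\,a\,(π_1 ∘ c)\,(π_2 ∘ c)\bigr)
\]
for $a:A$ and $c : B\,a → P'$, so that the first projection $π_1 : P' → \QW$ becomes, by construction, an $S_Σ$-algebra homomorphism from $(P',α')$ to $(\QW,\qwintro)$.

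Second, I would verify that $(P',α')$ satisfies the system of equations $ε$. The key technical step is a compatibility lemma linking the Kleisli bind on $P'$ with the $\lift$ operation of~\eqref{eq:lift}: for every $t : T_Σ\,X$ and $ρ : X → P'$,
\[
  t ⋙ ρ \heq \bigl(t ⋙ (π_1 ∘ ρ),\; \lift_{P,X}\,p\,ρ\,t\bigr),
\]
proved by a straightforward induction on $t$, appealing to function extensionality in the $σ$-case to identify $π_1 ∘ (λ\,x·\, b\,x ⋙ ρ)$ with $λ\,x·\, b\,x ⋙ (π_1 ∘ ρ)$. Given this lemma, checking satisfaction at an equation $e:E$ and environment $ρ : V\,e → P'$ reduces to a componentwise comparison of $l\,e ⋙ ρ$ and $r\,e ⋙ ρ$: first components agree by $\qwequate$ at $e$, second components by the hypothesis $p_{\resp}\,e\,ρ$.

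Third, $\qwrec$ applied to $(P',α')$ together with this satisfaction proof yields a homomorphism $f : \QW → P'$. Post-composing with the homomorphism $π_1$ gives a homomorphism $π_1 ∘ f : \QW → \QW$; the identity on $\QW$ is another. By $\qwuniq$ both must coincide with $\qwrec\,\QW\,\qwintro\,\qwequate$, so $π_1 ∘ f = \mathrm{id}_{\QW}$, which in extensional type theory amounts to the definitional equality $π_1(f\,x) ≡ x$. I then set
\[
  \qwelim\,x := π_2(f\,x) : P\,x.
\]
The computation rule $\qwcomp$ is immediate from the homomorphism property of $f$ at $(a,b)$: $f(\qwintro(a,b)) ≡ α'(a, f ∘ b)$, whose second component is $p\,a\,(π_1 ∘ f ∘ b)\,(π_2 ∘ f ∘ b) ≡ p\,a\,b\,(\qwelim ∘ b)$, using $π_1 ∘ f = \mathrm{id}$.

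The main obstacle is the bind-versus-$\lift$ compatibility lemma, together with the heterogeneous-equality bookkeeping it forces once combined with $\qwequate$: the two sides $l\,e ⋙ ρ$ and $r\,e ⋙ ρ$ a priori pair values of $P$ living over distinct points of $\QW$, and it is only $\qwequate$ that identifies those indices. In extensional type theory this identification is invisible, but in the Agda formalisation (as the footnote indicates) it surfaces as explicit transport along propositional equalities; everything else is a formal consequence of the initiality package.
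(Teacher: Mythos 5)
Your proposal is correct and takes essentially the same approach as the paper: the same total-space algebra on $P' = \sum_{x:\QW} P\,x$, satisfaction checked componentwise via $\qwequate$ and $p_\resp$, and $\qwelim$ defined as $π_2 ∘ \qwrec\,P'\,β\,s$. You merely make explicit two steps that the paper states tersely — the bind-versus-$\lift$ compatibility lemma (the paper summarises it as ``taking projections distributes over $⋙$'') and the fact that $π_1 ∘ f = \mathrm{id}$ via uniqueness of homomorphisms — which is a useful clarification but not a different route.
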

(Note that~\eqref{ind_step_respects_equations} uses heterogeneous
identity $\heq$, because \(\lift_{P,X}\,p\,ρ\,(l\,e)\) and
\(\lift_{P,X}\,p\,ρ\,(r\,e)\) inhabit different types, namely
\(P\,(l\,e ⋙ (π₁ ∘ ρ))\) and \(P\,(r\,e ⋙ (π₁ ∘ ρ))\) respectively.)
\begin{proof}
  To define the eliminator, we must first use the algebra map on the more
  general type \(P' ≝ ∑_{x : \QW} P\,x\). The
  algebra map, \qwrec, requires that this type \(P'\) has an algebra structure
  \(β : S_Σ(P') → P'\), which is given by:
  \begin{equation}
    β\,(a, b) ≝ (\qwintro (a, π₁ ∘ b), p\,a\,(π₁ ∘ b)\,(π₂ ∘ b))
  \end{equation}
  Moreover, we must show that the recursor satisfies the equations
  by giving a proof \(s: \Sat_{β,ε}(P')\), which we do pointwise on
  the two elements of the dependent product \(P'\). Taking
  projections distributes (possibly dependently) over $ ⋙$; so to
  construct $s$ it suffices that, given any \(e : E\) and
  \(ρ : V\,e → P'\), we have the two terms:
  \begin{equation}\label{beta_alg_respectful}
    \begin{array}{rcrcl}
      \qwequate\,e\,ρ &:& l\,e ⋙ (π₁ ∘ ρ) &=& r\,e ⋙ (π₁ ∘ ρ) \\
      p_\resp\,e\,ρ &:& \lift\,p\,ρ\,(l\,e) &==& \lift\,p\,ρ\,(r\,e)
    \end{array}
  \end{equation}
  Then the eliminator is defined by taking the second projection of this
  recursor.
  \begin{equation}
    \qwelim ≝ π₂ ∘ \qwrec\,P'\,β\,s
  \end{equation}
  Given an element \((a,b) : S_Σ(\QW)\), we can prove that applying
  the eliminator to it computes as expected, that is
  \(\qwelim\,(\qwintro\,(a,b)) = p\,a\,b\,(\qwelim ∘ b)\), as follows,
  where we write $r'$ for $\qwrec\,P'\,β\,s$:
  \begin{align*}
    &\qwelim\,(\qwintro\,(a,b))
    \\ =={}& π₂\,(r'(\qwintro\,(a,b)))
           &\text{def.\ of \qwelim}
    \\ =={}& π₂\,(β\,(a,r' ∘ b))
           &\text{using \qwrechom{} against \(\qwintro\,(a,b)\)}
    \\ =={}& p\,a\,(π₁ ∘ r' ∘ b)\,(π₂ ∘ r' ∘ b)
           &\text{def.\ of \(β\)}
    \\ =={}& p\,a\,b\,(π₂ ∘ r' ∘ b)
           &\parbox[c]{0.43\textwidth}{\raggedleft{}\(r'\) preserves \QW{} in the
           first component; follows from \qwrechom{} and \qwuniq{}}
    \\ =={}& p\,a\,b\,(\qwelim ∘ b)
           &\text{def.\ of \qwelim}
           &\qedhere
  \end{align*}
\end{proof}

\subsection{QWI-types}%
\label{sec:qwi-types}

In this section we consider \emph{indexed} quotient inductive types
specified by families of (possibly infinitary) equational theories. As
for ordinary W-types, the indexed version of QW-types gives convenient
expressive power; see~\Cref{sec:qit_encoding}. To describe the indexed
case we first introduce some notation for indexed families of types
and functions.

Given a type \(I:𝓤\), when we map between two \(I\)-indexed types, say
\(A : 𝓤ᴵ\) to \(B : 𝓤ᴵ\), we will write
\begin{equation}
  A ⇁ B ≝ ∏_{k : I} Aₖ → Bₖ
\end{equation}
for the function type. The composition of \(f: A ⇁ B\) and
\(g: B ⇁ C\) will just be written as
\begin{equation}
  g ∘ f ≝ λ i.λ x.\,gᵢ(fᵢ\,x)
\end{equation}
We also often need to define an indexed family of types \(B_i\,a : 𝓤ᴵ\)
which is dependent on some $i:I$ and element \(a : Aᵢ\) for \(A : 𝓤ᴵ\). The type
of such families will be written as
\begin{equation}
  A ⥗ 𝓤ᴵ ≝  ∏_{i : I} (Aᵢ → 𝓤ᴵ)
\end{equation}
We take our notion of an indexed signature for a WI-type (indexed
W-type) to be a particular case of indexed containers
\parencite{abbott2005containers}, namely an element of the type
\(∑_{I : 𝓤} ∑_{A : 𝓤ᴵ} ∏_{i : I} (Aᵢ → 𝓤ᴵ)\), which with the above
notational conventions we write as
\begin{equation}
    \Sig ≝ ∑_{I : 𝓤} ∑_{A : 𝓤ᴵ} (A ⥗ 𝓤ᴵ)
\end{equation}
We only need indexed containers whose source and target indexing
types are the same, since we only need to consider endofunctors on the
category of $I$-indexed families in $𝓤$ and their algebras. Thus an
indexed signature \(Σ ≡ (I, A, B)\) is a triple that can be thought of
as a set of indices (or sorts) \(I\), an \(I\)-indexed family of operators
\(Aᵢ\), and a function mapping each operator \(a : Aᵢ\) to an
\(I\)-indexed family of arities in \(𝓤ᴵ\). For instance, the type of
vectors over some type \(D\) has an indexed signature as follows
(see~\cref{exa:length-indexed-multisets}):
\begin{itemize}
    \item natural numbers as indices;
    \item for index \(0\) a parameter-less operator for the empty vector, and
        for index \(n + 1\) an operator parametrised by \(D\) for cons; and
    \item for the operator indexed by \(0\), a family of empty types for
        arities, and for operators indexed by \(n + 1\), a family
        of arities which is the unit type just in the
        \(n\)th index, and empty otherwise.
\end{itemize}
Each indexed signature \(Σ ≡ (I, A, B)\) determines a polynomial
endofunctor \(S_Σ : 𝓤ᴵ → 𝓤ᴵ\), which is defined at a family \(X : 𝓤ᴵ\)
by
\begin{equation}%
  \label{eq:qwi-types-endo}
  (S_Σ(X))ᵢ ≝ ∑_{a : Aᵢ} (Bᵢ\,a ⇁ X) \quad\text{(for each \(i:I\))}
\end{equation}
An \(S_Σ\)-\emph{algebra} is by definition an element of the dependent product
\begin{equation}
  \Alg_Σ ≝ ∑_{X : 𝓤ᴵ} (S_Σ(X) ⇁ X)
\end{equation}
(For example, when $Σ$ is the signature for vectors over some type
$D$, then an algebra $(X,α): \Alg_Σ$ amounts to $α_0: 𝟙 → X_0$ and
$α_{n+1}: D × X_n → X_{n+1}$.)

\(S_Σ\)-algebra morphisms \((X,α)→(X',α')\)
are given by an
\(I\)-indexed family of functions \(h : X ⇁ X'\) together with a
family of elements in the
types
\begin{equation}
    (\isHom\,h)ᵢ ≝ ∏_{a : Aᵢ} ∏_{b : Bᵢ\,a ⇁ X}
    (α'ᵢ\,(a, h ∘ b) = hᵢ(αᵢ(a, b)))
\end{equation}
The WI-type \(🅆_{i:I,a : Aᵢ} Bᵢ\,a\), or \(🅆_Σ\), determined by
\(Σ ≡ (I, A, B)\) is the underlying type of an initial $S_Σ$-algebra.
The elements of \(🅆_Σ\) represent an \(I\)-indexed family of \emph{closed}
algebraic terms (i.e.~well-founded trees) over the signature \(Σ\).
Given a  family \(V :
𝓤ᴵ\), the family \(T_Σ(V):𝓤ᴵ\) of \emph{open} terms with variables from
\(V\) is the inductively defined family with constructors
\begin{equation}
  \begin{aligned}
    η &: V ⇁ T_Σ\,V\\
    σ &: S_Σ (T_Σ\,V) ⇁ T_Σ\,V
  \end{aligned}
\end{equation}
Given an \(S_Σ\)-algebra \((X, α) : \Alg_Σ\) and an \(I\)-indexed
family of functions \(f : V ⇁ X\), the unique morphism of
\(S_Σ\)-algebras from the \(S_Σ\)-algebra \((T_Σ(V),σ)\) to \((X, α)\)
that extends $f$ has an underlying family of functions \(T_Σ(V) ⇁ X\)
mapping each \(t : (T_Σ\,V)ᵢ\) to the element \(t ⋙ f\) defined
analogously to~\eqref{def_bind}.

Given an indexed signature \(Σ ≡ (I, A, B) : \Sig\), a \emph{system of
  equations} for it is an element of the dependent product
\begin{equation}%
  \label{eq:qwi-types-1}
  \Syseq_Σ ≝ ∑_{E : 𝓤ᴵ} ∑_{V : E ⥗ 𝓤ᴵ}
  \bigg( ∏_{i:I}∏_{e : Eᵢ} (T_Σ(Vᵢ\,e))ᵢ\bigg)
  × \bigg( ∏_{i:I}∏_{e : Eᵢ} (T_Σ(Vᵢ\,e))ᵢ\bigg)
\end{equation}
An \(S_Σ\)-algebra \(α : S_Σ(X) ⇁ X\) satisfies the system of equations \(ε ≡
(E,V,l,r) : \Syseq_Σ\) if for each \(i : I\) there is a proof of
\begin{equation}
  (\Sat_{α,ε}(X))ᵢ ≝ ∀(e : Eᵢ).∀(ρ : V_{i}\,e ⇁ X).\; ((l_i e) ⋙ ρ) =
  ((r_i e) ⋙ ρ)
\end{equation}

\begin{definition}[\textbf{QWI-type}]%
  \label{def:qwit}
  A \emph{QWI-type} for an indexed signature \(Σ ≡ (I, A, B) : \Sig\)
  and a system of equations \(ε ≡ (E, V, l, r) : \Syseq_Σ\) is given
  by an $I$-indexed family of types \(\QW : 𝓤ᴵ\) equipped with an
  \(S_Σ\)-algebra structure and a proof that it satisfies the
  equations
  \begin{align}
    \qwintro &: S_Σ(\QW) ⇁ \QW\label{def_qwiintro}\\
    \qwequate &: \Sat_{\qwintro,ε}(\QW)\label{def_qwiequate}
                \intertext{%
                together with functions that witness that it is an initial such
                algebra:
                }
                \qwrec &: ∏_{X : 𝓤ᴵ} ∏_{α : S_Σ(X) ⇁ X} \left(\Sat_{α,ε}(X)
                         → \left(\QW ⇁ X\right)\right)\label{def_qwirec}\\
    \qwrechom &: ∏_{X : 𝓤ᴵ}∏_{α : S_Σ(X) ⇁ X}
                ∏_{p : \Sat_{α,ε}(X)} \isHom\,(\qwrec\,X\,α\,p)\label{def_qwirechom} \\
    \qwuniq &:
              \prod_{X : 𝓤ᴵ}∏_{α : S_Σ(X) ⇁ X}∏_{p : \Sat_{α,ε}(X)}∏_{f : \QW ⇁ X}
              \isHom\,f → \qwrec\,X\,α\,p = f\label{def_qwiuniq}
  \end{align}
\end{definition}

%%% Local Variables:
%%% mode: latex
%%% TeX-master: "../qw-journal"
%%% End:

\FloatBarrier
\section{Weakly initial sets of covers}%
\label{sec:wisc}

In \Cref{sec:qw-types} we defined a QWI-type in a topos to be an
initial algebra for a given (possibly infinitary) indexed signature
and system of equations (\Cref{def:qwit}). If one interprets these
notions in the topos of sets in classical Zermelo-Fraenkel set theory
with the Axiom of Choice (ZFC), one regains the usual notion from
universal algebra of initial algebras for equational theories that are
multi-sorted~\parencite{BirkhoffG:heta} and
infinitary~\parencite{LintonFEJ:aspeq}. Thus in ZFC, the QWI-type for
a signature $Σ=(I,A,B)$ and system of equations $ε=(E,V,l,r)$ can be
constructed by first forming the $I$-indexed family $🅆 _Σ$ of sets of
well-founded trees over $Σ$ and then quotienting by the congruence
relation $∼_ε$ on $🅆 _Σ$ generated by $ε$. The $I$-indexed family of
quotient sets $(🅆 _Σ)/{∼_ε}$ yields the desired initial algebra for
$(Σ,ε)$ provided the $S_Σ$-algebra structure on $🅆 _Σ$ induces one on
the quotient sets. It does so, because for each operator in the
signature, using the Axiom of Choice (AC) one can pick representatives
of the (possibly infinitely many) equivalence classes that are the
arguments of the operator, apply the interpretation of the operator in
$🅆 _Σ$ and then take the equivalence class of that. So the topos of
sets in ZFC has QWI-types.

Is this use of AC really necessary?
\textcite[Section~9]{BlassA:worfac} shows that if one drops AC and
just works in ZF, then provided a certain large cardinal axiom is
consistent with ZFC, it is consistent with ZF that there is an
infinitary equational theory with no initial algebra. He shows this by
first exhibiting a countably presented equational theory whose initial
algebra has to be an uncountable regular cardinal; and secondly
appealing to the construction of \textcite{GitikM:unccs} of a model of
ZF with no uncountable regular cardinals (assuming a certain large
cardinal axiom). \textcite{lumsdaine2019semantics} turn the infinitary
equational theory of Blass into a higher-inductive type that cannot be
proved to exist in ZF (and hence cannot be constructed in type theory
just using pushouts and the natural numbers). We show in
\Cref{blass_lumsdaine_shulman_type} that this higher inductive type
can be presented as a QWI-type.

So one cannot hope to construct QWI-types using a type theory which is
interpretable in just ZF\@. However, the type theory in which we work
(\Cref{sec:type-theory}) already requires going beyond ZF to be able
to give a classical set-theoretic interpretation of its universes (by
assuming the existence of enough strongly inaccessible cardinals, for
example). So the above considerations about non-existence of initial
algebras for infinitary equational theories in ZF do not necessarily
rule out the construction of QWI-types in other toposes with natural
numbers object and universes. In \Cref{sec:construction}, we show that
QWI types exist in a topos if it satisfies a weak form of choice
principle that we call \IWISC{} and which is known to hold for a wide
range of toposes.  \IWISC{} is a version for universes in the type
theory in which we are working of a property introduced by
\textcite{vandenberg2014axiom}, who work in constructive set theory
CZF and call it the \emph{Axiom of Multiple Choice} (building upon
previous work by \textcite{MoerdijkI:typttc}); it is related to the
\emph{Type Theoretic Collection Axiom} \textsc{TTCA}$_f$ of
\textcite{StreicherT:reamc} (see \Cref{rem:global-wisc}).
\begin{definition}[\textbf{Indexed WISC}]%
  \label{def:wisc}
  A \emph{family} in a universe $𝓤$ is just a pair $(A,B)$ where $A:𝓤$
  and $B:A→𝓤$. A family $(C,E)$ in $𝓤$ is a \emph{wisc} for $Y:𝓤$ if
  for all $X:𝓤$ and surjective functions $q:X → Y$ in $𝓤$, there
  exists $c:C$ and $f:E\,c → X$ for which $q∘f: E\,c→ Y$ is
  surjective. A family $(C,E)$ is a wisc for a family $(A,B)$ if it is
  a wisc for each type $B\,a$ in the family.  We say that \emph{the
  universe $𝓤$ satisfies \IWISC{}} if for all families $(A,B)$ in $𝓤$,
  there exists a family in $𝓤$ which is a wisc for $(A,B)$.
\end{definition}
Following [\url{https://ncatlab.org/nlab/show/WISC}], ``wisc'' stands
for ``weakly initial set of covers'', with the terminology being
justified as follows.  A \emph{cover} of $Y:𝓤$ is just a surjective
function $q:X → Y$ for some $X:𝓤$. If $(C,E)$ is a wisc for $Y$ as in
the above definition, then the family of all covers of the form
$E\,c → Y$ with $c:C$ is weakly initial, in the sense that for
every cover $q:X → Y$ in $𝓤$, there is a cover in the family
that factors through $q$.

Note that in classical set theory AC implies \IWISC;\@ this is because
the Axiom of Choice implies that covers split and hence any family
$(A,B)$ is a wisc for itself. \IWISC{} is independent of ZF in
classical logic~\parencite{KaragilaA:embocw,RobertsDM:weacpw}. The
results of \textcite{vandenberg2014axiom} imply that in constructive
logic it is preserved by many ways of making new toposes from existing
ones, in particular by the formation of sheaf toposes and
realizability toposes over a given topos. Thus it holds in all the
toposes commonly used for the semantics of Type Theory. It is in this
sense that \IWISC{} is a constructively acceptable form of the axiom
of choice. (\textcite{RobertsDM:weacpw} gives examples of toposes
which fail to satisfy it.)

\begin{remark}%
  \label{rem:global-wisc}
  \IWISC{} is a property of a single universe, in contrast to
  Streicher's type theoretic formulation of a WISC axiom,
  \textsc{TTCA}$_f$~\parencite{StreicherT:reamc}, which uses two
  universes (and which can be shown to imply IWISC). With a single
  universe it seems necessary to quantify over indexed families in the
  universe (hence our choice of terminology), rather than just over
  types in the universe as in \emph{loc.~cit.}; see
  \textcite[Section~5.1]{LevyP:broigp}.  Note that \IWISC{} asks that
  a wisc merely exists for each family in $𝓤$. A stronger requirement
  on $𝓤$ would be to ask for a function assigning a wisc to each
  family; and this is equivalent to asking for a function mapping each
  $B:𝓤$ to a wisc for $B$. (For note that, given a family $(A,B)$, if
  we have a function mapping each $a:A$ to a wisc $(C_a,E_a)$ for
  $B\,a$, then the single family $(C,E)$ with $C=∑_{a:A}C_a$ and
  $E(a,c) = E_a\,c$ is a wisc for all the $B\,a$ simultaneously.)
  \textcite{LevyP:broigp} calls this property \GWISC{}. Although the
  topos of sets in ZFC with Grothendieck universes satisfies \GWISC{}, it
  is not known which other toposes do.
\end{remark}
The following lemma gives a convenient reformulation of the wisc
property from \Cref{def:wisc} that we use in the next section.
\begin{lemma}%
  \label{lem:wisc}
  Suppose that $A:𝓤$ has a wisc $(C:𝓤,E:C→𝓤)$.  If $B:A→𝓤$ and
  $φ:∏_{x:A}(B\,x→\Prop)$ satisfy $∀(x:A).∃(y:B\,x).\; φ\,x\,y$, then
  there exist $c:C$, a surjection $p:E\,c→ A$ and a function
  $q:\prod_{z:E\,c}B(p\,z)$ satisfying
  $∀(z:E\,c).\;φ\,(p\,z)\,(q\,z)$.
\end{lemma}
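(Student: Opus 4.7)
The plan is to apply the wisc property of $(C,E)$ to a suitable cover of $A$ built from the hypothesis. The idea is that the hypothesis $∀(x:A).∃(y:B\,x).\;φ\,x\,y$ exhibits the first projection out of the type of witnessed pairs as a surjection onto $A$, and then we can pull everything back along a wisc-cover.

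Concretely, I would first form the type
\begin{equation*}
  X \;≝\; ∑_{x:A}\{y:B\,x \mid φ\,x\,y\}
\end{equation*}
which lies in $𝓤$ because $A:𝓤$, $B\,x:𝓤$, and $φ\,x\,y:\Prop$ is in $𝓤$, using closure of $𝓤$ under $∑$-types and the convention that comprehension types $\{y:B\,x \mid φ\,x\,y\}$ are synonymous with $∑_{y:B\,x}φ\,x\,y$. The first projection $π_1 : X → A$ is then a surjection: for every $x:A$, the hypothesis supplies a merely existing $y:B\,x$ together with a proof of $φ\,x\,y$, whence $(x,(y,{-}))$ merely exists as a preimage of $x$ under $π_1$.

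Since $(C,E)$ is a wisc for $A$, applying it to the cover $π_1:X → A$ yields some $c:C$ and $f:E\,c → X$ such that $π_1 ∘ f : E\,c → A$ is a surjection. Set $p ≝ π_1 ∘ f$. For each $z:E\,c$, the element $f\,z : X$ has the shape $(p\,z,(y_z,r_z))$ with $y_z : B(p\,z)$ and $r_z : φ\,(p\,z)\,y_z$, so defining $q\,z ≝ π_1(π_2(f\,z))$ gives $q: ∏_{z:E\,c}B(p\,z)$, and $π_2(π_2(f\,z))$ witnesses $φ\,(p\,z)\,(q\,z)$ for every $z:E\,c$.

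There is no real obstacle here beyond bookkeeping: the only thing to check carefully is that the type $X$ actually lives in $𝓤$ (so that the wisc property applies to it) and that ``surjection'' is being used in the same merely-existential sense in both the hypothesis and in \Cref{def:wisc}, which it is. The resulting $(c,p,q)$ satisfies the required conclusion by construction.
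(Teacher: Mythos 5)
Your proof is correct and follows essentially the same route as the paper: form $∑_{x:A}∑_{y:B\,x}φ\,x\,y$, observe its first projection is a cover of $A$, apply the wisc to factor through it, and read off $p$ and $q$ from the projections. The only differences from the paper's argument are notational (your $X$, $f$ are the paper's $Φ$, $e$) and that you spell out explicitly the universe and surjectivity bookkeeping that the paper leaves implicit.
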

\begin{proof}
  Consider the type $Φ\defeq∑_{x:A}∑_{y:B\,x}\,φ\,x\,y$ in $𝓤$. By assumption
  $π_1:Φ→A$ is a surjection, so since $(C,E)$ is a wisc for $A$, there exist
  $c:C$ and $e:E\,c → Φ$ with $π_1∘e$ a surjection. So we can take $p ≝ π_1∘e$
  and $q ≝ π₁∘π₂∘e$; and then $π₂(π₂(e\,z))$ is a proof that $φ\,(p\,z)\,(q\,z)$
  holds.
\end{proof}

%%% Local Variables:
%%% mode: latex
%%% TeX-master: "../qw-journal"
%%% End:

\FloatBarrier
\section{Size}%
\label{sec:size}

\textcite{swan2018wtypes} uses \IWISC{} to construct \emph{W-types
  with reductions}, which are a simple special case of QWI-types (see
\Cref{exa:wsusp}). We will see that in fact \IWISC{} implies the
existence of all QWI-types, but using a different approach from that
of \textcite{swan2018wtypes}. We show in this section that, starting
with a given signature $Σ$ and system of equations $ε$
(\cref{def:syse}), using \IWISC{} one can construct a well-founded
type of ``sizes'' with the property that colimits of diagrams indexed
by such sizes are preserved when taking exponents by the arity types of $Σ$
and $ε$. This will enable us to construct the QW-type for $(Σ,ε)$ as a
colimit in \Cref{sec:construction}. We consider the general case of
indexed signatures in \cref{rem:indexed-cocontinuity}; and the
construction of QWI-types as a sized-indexed colimit is detailed in
the accompanying Agda
formalization~\parencite{fiore2021agdacode}. Size is here playing a
role in the constructive type theory of toposes that in classical set
theory would be played by ordinal numbers. The next definition gives
the properties of size that we need.

\begin{definition}[\textbf{Size}]%
  \label{def:size}
  A type $\Size$ in a universe $𝓤$ will be called a \emph{type of
    sizes} if it comes equipped with
  \begin{itemize}
  \item a binary relation $\_<\_:\Size → \Size → \Prop$ which
    is transitive
    \begin{equation}%
      \label{eq:<trans}
      ∀ i,j,k.\; i < j → j < k → i <k
    \end{equation}
    and well-founded
    \begin{equation}%
      \label{eq:<iswf}
      ∀(φ : \Size→\Prop).(∀ i. (∀ j.\;{j<i} → φ\,j) → φ\,i) → ∀ i.\;φ\,i
    \end{equation}

  \item a distinguished element $0ˢ$

  \item a binary operation $\_⊔ˢ\_:\Size→\Size→\Size$ which gives
    upper bounds with respect to $<$ for pairs of elements
    \begin{equation}%
    \label{eq:<ub}
    ∀ i,j.\; i < i ⊔ˢ j  \;∧\; j < i ⊔ˢ j
  \end{equation}
  \end{itemize}
  Note that defining $\Suc{i} \defeq i ⊔ˢ i$ we get a form of successor
  operation satisfying
  \begin{equation}
    ∀ i.\; i < \Suc{i}\label{eq:<suc}
  \end{equation}
  For the examples of types of sizes given in \Cref{exa:plump} this
  successor operation preserves $<$, but we do not need that property
  for the results in this paper.\footnote{In fact we do not need the
    distinguished element $0ˢ$ either, but it does no harm to require
    it and then sizes are directed with respect to $<$ in the usual
    sense of having upper bounds for all finite subsets, including
    the empty one.}
\end{definition}
Such a type of sizes has many properties of the classic notion of
limit ordinal, except that we do not require the order to be total
($∀i, j.\; i<j ∨ i=j ∨ j <i$); that would be too strong in a
constructive setting and indeed does not hold in the examples
below. Nor do we have any need for an extensionality property
($∀ i,j.\;(∀ k.\;k < i ↔ k< j) → i = j$). In the precursor to this
paper~\parencite{fiore2020constructing} we made use of Agda's built in
type $\Size$ and its version of sized
types~\parencite{AbelA:typbti}. This shares only some of the
properties of the above definition. In particular, it has a
``stationary'' size $∞$ satisfying $∞ < ∞$ and hence the relation $<$
for Agda's notion of size is not well-founded.\footnote{Unfortunately
  in the current version of Agda (2.6.2) it is possible to prove
  well-foundedness of $<$ for its built in type of sizes $\Size$, and
  this leads immediately to logical unsoundness; see
  \url{github.com/agda/agda/issues/3026}. For this reason we avoid the
  use of Agda's sized types in the current Agda development
  accompanying this paper.}

\begin{notation}[\textbf{Bounded quantification over sizes}]
  We write $∏_{j<i}\,\_$ for $∏_{j:(↓ i)}\,\_$, where
  \begin{equation}%
    \label{eq:down-seg}
    ↓ i \defeq \{ j : \Size ∣ j<i \}
  \end{equation}
  is the type of sizes below $i:\Size$. Similarly for other binders
  involving sizes, such as $∀{j<i}.\;\_$ and $ƛ{j<i}·\_$.
\end{notation}
In the next section we need the fact that well-founded recursion~\eqref{eq:wfrec} can be reduced to well-founded induction~\eqref{eq:<iswf} by defining the graph of the recursive function and
then appealing to unique choice~\eqref{eq:uniquechoice} and function
extensionality~\eqref{eq:funext}:
\begin{proposition}[\textbf{Well-founded recursion}]%
  \label{prop:wfrec}
  For any family of types $A:\Size→ 𝓤$ and function
  $α:∏_i\left(\left(∏_{j<i}\,A\,j\right) → A\,i\right)$,
  there is a unique function $\Rec A\,α : ∏_i\,A\,i$
  satisfying
  \begin{equation}%
    \label{eq:wfrec}
    ∀i.\; \Rec A\,α\,i = α\,i\,(ƛ{j< i}· \Rec A\,α\,j)
  \end{equation}
\end{proposition}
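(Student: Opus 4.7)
The strategy is exactly the one hinted at in the text: build the graph of the would-be recursive function as a proposition-valued relation using well-founded induction~\eqref{eq:<iswf}, show it is total and single-valued, then cash this in via unique choice~\eqref{eq:uniquechoice} and function extensionality~\eqref{eq:funext}.

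First I would define a family $G : \prod_{i:\Size}(A\,i \to \Prop)$ by well-founded induction on $i$. Concretely, for each $i$, assuming $G\,j$ has been defined for all $j<i$, set
\[
  G\,i\,a \defeq \exists\bigl(g:\textstyle\prod_{j<i}A\,j\bigr).\;\bigl(\forall(j<i).\,G\,j\,(g\,j)\bigr)\,\wedge\,\bigl(a = \alpha\,i\,g\bigr).
\]
This is a well-formed inductive definition because it is $\Prop$-valued and only refers to $G\,j$ for $j<i$. Note that $G\,i\,a$ is a proposition thanks to $\Prop$'s impredicative quantifiers.

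Next I would prove by well-founded induction on $i:\Size$ the statement $\exists!(a:A\,i).\,G\,i\,a$. For the induction step, assume $\exists!(a:A\,j).\,G\,j\,a$ for every $j<i$. By unique choice applied pointwise in $j<i$, this yields a function $g:\prod_{j<i}A\,j$ together with a proof that $G\,j\,(g\,j)$ holds for every $j<i$. Then $a \defeq \alpha\,i\,g$ witnesses existence of an element satisfying $G\,i$; uniqueness at $i$ follows because any other witness arises from some $g'$ that must agree pointwise with $g$ by the inductive uniqueness, so by function extensionality $g = g'$ and hence $\alpha\,i\,g = \alpha\,i\,g'$.

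Applying unique choice once more, globally in $i:\Size$, produces the desired function $\Rec A\,\alpha : \prod_{i}A\,i$ together with a proof that $G\,i\,(\Rec A\,\alpha\,i)$ holds for every $i$. Unfolding the definition of $G\,i$ and again using function extensionality to identify the extracted $g$ with $\lambda(j{<}i)\cdot\Rec A\,\alpha\,j$ (their pointwise equality is immediate from the uniqueness clause of $G$) yields the recursion equation~\eqref{eq:wfrec}. For uniqueness of $\Rec A\,\alpha$, suppose $f:\prod_{i}A\,i$ also satisfies $f\,i = \alpha\,i\,(\lambda(j{<}i)\cdot f\,j)$ for all $i$; by well-founded induction on $i$, once $f\,j = \Rec A\,\alpha\,j$ is known for all $j<i$, function extensionality gives $\lambda(j{<}i)\cdot f\,j = \lambda(j{<}i)\cdot \Rec A\,\alpha\,j$, and applying $\alpha\,i$ and the two recursion equations yields $f\,i = \Rec A\,\alpha\,i$; a final appeal to function extensionality concludes $f = \Rec A\,\alpha$.

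The only real subtlety is the inductive definition of $G$ itself: it must be justified as a well-founded recursion into $\Prop$, which works because the defining clause only consults $G\,j$ for $j<i$ and because $\Prop$ is closed under the impredicative quantifiers used on the right-hand side. Everything else is routine use of unique choice and function extensionality.
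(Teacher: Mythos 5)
Your overall strategy — build the graph as a $\Prop$-valued relation, prove by well-founded induction~\eqref{eq:<iswf} that it is total and single-valued, extract the function with unique choice~\eqref{eq:uniquechoice}, then prove uniqueness by another induction — is exactly the paper's. The rest of the argument (deriving the recursion equation and showing any competitor coincides) also lines up.

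The one place you part company with the paper is in how $G$ is introduced, and that is precisely where the real work is hiding. You \emph{define} $G$ by ``well-founded recursion into $\Prop$'', observing that the right-hand side consults $G\,j$ only for $j<i$. But well-founded recursion — defining a $\Size$-indexed family of things by reference to smaller indices — is exactly what this proposition is trying to establish, and $\Prop$ is just another universe as far as the schematic family $A:\Size\to 𝓤$ is concerned, so you cannot appeal to the proposition (or an informal version of it) to set up $G$. The fact that $\Prop$ is closed under impredicative quantifiers does not by itself license the recursion; what it does give you is the ability to form the \emph{least} family of relations closed under the one-directional rule
\[
  \forall i.\forall\bigl(f:\textstyle\prod_{j<i}A\,j\bigr).\;\bigl(\forall (j<i).\,R_j(f\,j)\bigr) \to R_i(\alpha\,i\,f),
\]
either as $R_i(a) \defeq \forall R'.\,(\text{$R'$ closed}) \to R'_i(a)$ impredicatively or as an inductively defined datatype in $\Prop$. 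That is the construction the paper uses, and it needs no recursion principle at all. The leastness of $R$ then does the work that your bi-implicational ``fixed-point'' form of $G$ was meant to do: it lets you unfold $R_i$ when proving unique existence. Your fixed-point $G$ is of course extensionally equal to the least $R$ (one shows $G\subseteq R$ by minimality and $R\subseteq G$ by closure), but you have to \emph{get hold of it} non-circularly, and saying ``the definition works because $\Prop$ is impredicative and the RHS only mentions smaller indices'' elides the step that actually makes this legitimate. Replace your recursive definition of $G$ with the impredicative least-family definition (or an inductive $\Prop$-valued datatype, as the paper's Agda development does) and the rest of your argument goes through as written.
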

\begin{proof}
  Let $R : ∏_i (A\,i → \Prop)$ be the least\footnote{Instead of the
    impredicative definition of $R$ (and $<$ and $≤$ in
    \Cref{exa:plump}) as the least relation closed under some rules,
    in our Agda development such relations are constructed as
    inductively defined datatypes in the universe $\Prop$, allowing
    one to use dependent pattern-matching to simplify proofs about
    them.} family of relations satisfying
  \begin{equation}%
    \label{eq:wfrec-prf}\textstyle
    ∀i.∀(f:∏_{j<i}\,A\,j).\;(∀{j<i}.\;R_j(f\,j)) → R_i(α\,i\,f)
  \end{equation}
  The fact that $R$ is least with this property allows one to prove
  $∀ i.∃! (x:Aᵢ).\;R_i\,x$ by applying~\eqref{eq:<iswf} with
  $φ\,i \defeq ∃!\,(x:Aᵢ).\;R_i\,x$. So by unique
  choice~\eqref{eq:uniquechoice} there is $r: ∏_i\,A\,i$ satisfying
  $∀i.\; R_i(r\,i)$ and hence by~\eqref{eq:wfrec-prf} also satisfying
  $∀ i.\;R_i(α\,i\,(ƛ{j<i}· r\,j))$. Since for each $i$ there is a
  unique $x$ with $R_i\,x$, it follows that
  $r\,i = α\,i\,(ƛ{j<i}· r\,j)$.  So we can take $\Rec A\,α$ to be
  $r$.  If $r':∏_i\,A\,i$ also satisfies
  $∀i.\,r'i =α\,i\,(ƛ{j<i}·r'j)$, then $∀i.\,r'i= r\,i$ follows from
  another application of~\eqref{eq:<iswf}, so that $r'=r$ by function
  extensionality~\eqref{eq:funext}.
\end{proof}

\begin{example}[\textbf{Plump order}]%
  \label{exa:plump}
  Let $\Size:𝓤$ be the W-type determined by some type $\Op:𝓤$ and
  family $\Ar:\Op→𝓤$ (see \cref{sec:w-types}). Thus
  $\Size$ is inductively defined with constructor
  $σ:(∑_{a:\Op}(\Ar\,a → \Size)) → \Size$. The \emph{plump} ordering~\parencite{TaylorP:intso} on $\Size$ is given by the
  least relations ${\_<\_}, {\_≤\_}: \Size → \Size → \Prop$ satisfying for
  all $a:\Op$, $b:\Ar\,a → \Size$ and $i: \Size$
  \begin{align}
    &(∀(x:\Ar\,a).\;b\,x < i) → σ\,a\,b ≤ i\label{eq:plump1}\\
    &(∃(x:\Ar\,a).\;i≤ b\,x) → i < σ\,a\,b\label{eq:plump2}
  \end{align}
  It is not hard to see that the relation $<$ is
  transitive~\eqref{eq:<trans} and
  well-founded~\eqref{eq:<iswf}. Furthermore one can prove that $≤$ is
  reflexive and hence from~\eqref{eq:plump2} we get
  \begin{equation}%
    \label{eq:plump3}
    ∀(x:\Ar\,a).\; b\,x < σ\,a\,b
  \end{equation}
  So for each operation $a:\Op$, every family $b:\Ar\,a → \Size$ of
  elements of $\Size$ indexed by its arity $\Ar\,a$ has an upper bound
  with respect to $<$, namely $σ\,a\,b$. In particular, if the
  signature $(\Op,\Ar)$ contains an operation $a₂:\Op$ whose arity is
  $\Ar\,a₂ = 𝟚$, then an upper bound for any $i,j: \Size$ with respect
  to $<$ is given by
  \begin{equation}%
    \label{eq:ub}
    i ⊔ˢ j \defeq σ\,a₂\,b_{i,j}
  \end{equation}
  with $b_{i,j}:𝟚 → \Size$ defined by $b_{i,j}\,0 = i$ and
  $b_{i,j}\,1 = j$. So~\eqref{eq:<ub} is satisfied in
  this case. Similarly, if the signature $(\Op,\Ar)$ contains an
  operation $a₀:\Op$ whose arity is $\Ar\,a₀ = 𝟘$, then $\Size$
  contains a distinguished element
  \begin{equation}%
    \label{eq:sizezero}
    0ˢ \defeq σ\,a₀\,b
  \end{equation}
 with $b:𝟘→\Size$ uniquely determined. Therefore, we have:
\end{example}

\begin{lemma}%
  \label{lem:plump}
  If the signature $\Op:𝓤,\Ar:\Op→𝓤$ contains a nullary operation
  ($a₀:\Op$ with $\Ar\,a₀ = 𝟘$) and a binary operation ($a_2:\Op$ with
  $\Ar\,a₂=𝟚$), then the corresponding W-type $\Size:𝓤$, equipped with
  the plump order $<$, is a type of sizes in the sense of
  \Cref{def:size}. Furthermore for any $a:\Op$, every
  $(\Ar\,a)$-indexed family of sizes $b:\Ar\,a → \Size$ has an upper
  bound with respect to $<$ (namely $σ\,a\,b$). \qed
\end{lemma}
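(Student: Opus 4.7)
The lemma amounts to verifying the five clauses of \Cref{def:size} for the plump-ordered W-type together with the final upper-bound claim. The assumed nullary and binary operations immediately yield the distinguished element $0ˢ$ via~\eqref{eq:sizezero} and the binary upper-bound operation $⊔ˢ$ via~\eqref{eq:ub}. The substantive work is therefore (a) transitivity of $<$, (b) well-foundedness of $<$, and (c) the strict upper-bound fact $b\,x < σ\,a\,b$, which subsumes both~\eqref{eq:<ub} (by specialising $a$ to $a_2$) and the final clause of the lemma.

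For (c), I would first prove reflexivity of $≤$ by induction on the W-type structure of $i$: at $i = σ\,a\,b$, the inductive hypothesis gives $b\,x ≤ b\,x$ for each $x$, so by~\eqref{eq:plump2} we obtain $b\,x < σ\,a\,b$, and then~\eqref{eq:plump1} closes off $σ\,a\,b ≤ σ\,a\,b$. The intermediate inequality obtained along the way is precisely~\eqref{eq:plump3}, yielding (c).

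For (a), I would carry out a simultaneous induction on the inductive definitions of $<$ and $≤$, proving the four compatible statements $i ≤ j → j ≤ k → i ≤ k$, $i < j → j ≤ k → i < k$, $i ≤ j → j < k → i < k$ and $i < j → j < k → i < k$ together, with case analysis on the introduction rule used for the middle inequality. For (b), I would prove $∀i.\ \mathrm{Acc}_<(i)$ by W-induction on $i$: at $i = σ\,a\,b$, any $j < σ\,a\,b$ yields by~\eqref{eq:plump2} some $x$ with $j ≤ b\,x$; the inductive hypothesis gives $\mathrm{Acc}_<(b\,x)$; and a small auxiliary lemma that $\mathrm{Acc}_<$ is downward closed along $≤$, itself an easy consequence of transitivity, then delivers $\mathrm{Acc}_<(j)$.

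The main obstacle is coordinating the mutual dependence of $<$ and $≤$: the four transitivity statements must be organised so that the simultaneous induction terminates without circularity, and reflexivity (needed for~\eqref{eq:plump3}) and downward closure of $\mathrm{Acc}_<$ along $≤$ (needed for well-foundedness) must each be established at the right point in the chain. In the Agda formalisation this is handled most cleanly by presenting $<$ and $≤$ as mutually inductive families in $\Prop$ and using dependent pattern-matching on proofs.
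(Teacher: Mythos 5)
Your proposal is correct and matches the paper's (implicit) approach: the lemma is dispatched with an immediate \qed{} on the strength of \Cref{exa:plump}, which asserts without detail that $<$ is transitive and well-founded, that $\le$ is reflexive, and hence that~\eqref{eq:plump3} holds, leaving the actual verifications to the accompanying Agda formalisation. Your elaboration supplies exactly those verifications — reflexivity of $\le$ by W-induction (with~\eqref{eq:plump3} as a corollary), well-foundedness via $\mathrm{Acc}_<$ together with its downward closure along $\le$, and transitivity by the four-way mutual induction on $<$ and $\le$ — with one small caution on the last: the $i<j \to j<k \to i<k$ clause is the delicate one, since it cannot be reduced by unfolding only one hypothesis; you must pattern-match $i$ and unfold \emph{both} $i<j$ and $j<k$ two levels deep (reaching $b''\,x'' < b'\,x' < b\,x$) before the self-call, so the decreasing measure should be something like the sum of the W-type depths of $(i,j,k)$, which drops in one coordinate for the other three clauses but in all three here.
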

Given a signature $Σ$ in a universe $𝓤$ satisfying \IWISC{}, the lemma
allows us to prove the existence of a type of sizes $\Size$ with
enough upper bounds to be able to prove a cocontinuity property of the
polynomial endofunctor associated with $Σ$
(\Cref{cor:poly-cocontinuity} below); we apply this in
\Cref{sec:construction} to construct QW-types for a given signature
and system of equations (and QWI-types for indexed signatures and
systems of equations as defined in \cref{sec:qwi-types}). The
cocontinuity property has to do with colimits of $\Size$-indexed
diagrams in $𝓤$. To state it, we first recall some
semi-standard\footnote{They are only \emph{semi}-standard, because $<$
is not reflexive (indeed is irreflexive, because of well-foundedness),
so that $(\Size,<)$ is only a (thin) \emph{semi}-category.}
category-theoretic notions to do with diagrams and colimits.

\subsection{Colimits of size-indexed diagrams}%
\label{sec:colimit}

By definition, given a type of sizes $\Size$ in a universe $𝓤$
(\Cref{def:size}), a \emph{$\Size$-indexed diagram} is given by a
family of types $D: \Size→ 𝓤$ equipped with functions
$δ_{i,j}:D_i → D_j$ for all $i,j:\Size$ with $i < j$, satisfying
\begin{equation}%
  \label{eq:diag}
  ∀(i,j,k : \Size).\; i<j<k → δ_{i,k} = δ_{j,k}∘ δ_{i,j}
\end{equation}
As usual, the \emph{colimit} of such a diagram is a cocone of functions in
$𝓤$
\begin{equation}%
  \label{eq:colim-cocone}
  (ν_D)_i : D_i → \colim D \qquad
  ∀(i, j : \Size).\;i < j → (ν_D)_i = (ν_D)_j∘δ_{i,j}
\end{equation}
with the universal property that for any other cocone
\[
  f_i : D_i → C \qquad
  ∀(i, j : \Size).\;i < j → f_i = f_j∘δ_{i,j}
\]
there is a unique function $f:\colim D → C$ satisfying $∀ i.\; f_i = f
∘ (ν_D)_i$.

Colimits can be constructed using quotient types: we define a binary
relation $\_{∼}\_$ on $∑_i\, D_i$ by:
\begin{equation}%
  \label{eq:colim-construction}
  (i,x) ∼ (j,y) \;\defeq\;
  ∃(k : \Size).\; {i < k} \;∧\; {j < k} \;∧\;
  {δ_{i,k}\,x = δ_{j,k}\,y}
\end{equation}
This is an equivalence relation, because $(\Size,{<})$ has
property~\eqref{eq:<ub}.  Quotienting by it
yields
\begin{equation}%
  \label{eq:colim}
  \textstyle
  \colim D \;\defeq\; \left(∑_i D_i\right)/{∼}
\end{equation}
with the universal cocone functions $(ν_D)_i$ given by mapping each
$x:D_i$ to the equivalence class $[i,x]_{∼}$.
\begin{definition}[\textbf{Preservation of colimits by taking a
    power}]%
  \label{def:pres-colim-powers}
  Given a $\Size$-indexed diagram $(D,δ)$ in $𝓤$ and a type $X:𝓤$, we
  get a \emph{power} diagram $(D^X,δ^X)$ in $𝓤$ with
  \begin{equation}%
    \label{eq:pres-colim-powers}
    \begin{aligned}
      (D^X)_i &\defeq X → D_i &&(i: \Size)\\
      (δ^X)_{i,j} &\defeq ƛ(f:X → D_i)· δ_{i,j}∘ f &&(i,j: \Size)
    \end{aligned}
  \end{equation}
  Post-composition with $(ν_D)_i$ gives a cocone under the diagram $D^X$
  with vertex $X → \colim D$ and by universality this induces a function
  \begin{equation}%
    \label{eq:comparison}
    \begin{aligned}
      &κ_{D,X}: \colim(D^X) → (X → \colim D)\\
      &κ_{D,X}\,[i,f]_{∼} = (ν_D)_i∘ f
    \end{aligned}
  \end{equation}
  One says that \emph{taking a power by $X:𝓤$ preserves
    $\Size$-indexed colimits} if for all $\Size$-indexed
  diagrams $(D,δ)$ in $𝓤$ the function $κ_{D,X}$ is an isomorphism.
\end{definition}

\begin{theorem}[\textbf{Cocontinuity}]%
  \label{thm:cocontinuity}
  Suppose $𝓤$ is a universe satisfying \IWISC{} in a topos with
  natural numbers object. Given $A:𝓤$ and $B: A → 𝓤$, there exists a
  type of sizes $\Size : 𝓤$ with the property that for all $a:A$,
  taking a power by the type $B\,a : 𝓤$ preserves $\Size$-indexed
  colimits.
\end{theorem}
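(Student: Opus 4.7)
The plan is to realize $\Size$ as a W-type equipped with the plump order (\Cref{lem:plump}), choosing the signature via \IWISC{} so that $\Size$ admits, via~\eqref{eq:plump3}, upper bounds with respect to $<$ for all families of sizes arising in the proof. Concretely, apply \IWISC{} to the family $(A,B)$ to obtain a wisc $(C_0,E_0)$, and then apply \IWISC{} once more to the family indexed by $\sum_{a:A}\sum_{c:C_0}\{p:E_0\,c \to B\,a \mid p\text{ surjective}\}$ whose fibre at $(a,c,p)$ is the kernel pair $\{(z_1,z_2):(E_0\,c)^2 \mid p\,z_1 = p\,z_2\}$, obtaining a second wisc $(C_1,E_1)$. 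Let $(\Op,\Ar)$ comprise a nullary operation, a binary operation, and an operation of arity $E_0\,c$ (resp.~$E_1\,c$) for each $c:C_0$ (resp.~$c:C_1$). By \Cref{lem:plump} the resulting W-type $\Size$ with the plump order is a type of sizes, and by~\eqref{eq:plump3} any family $h : E_k\,c \to \Size$ has an upper bound with respect to $<$ given by applying the corresponding operation to $h$.

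To verify cocontinuity I fix $a:A$ and a $\Size$-indexed diagram $(D,δ)$, and show that $κ_{D,B\,a}$ is a bijection. For \emph{surjectivity}, given $f : B\,a \to \colim D$, each $f\,x$ has a representative in $\sum_i D_i$, so \Cref{lem:wisc} applied with $(C_0,E_0)$ yields some $c_0:C_0$, a surjection $p : E_0\,c_0 \to B\,a$, and $q : E_0\,c_0 \to \sum_i D_i$ with $[q\,z]_\sim = f(p\,z)$. Taking $i_0$ to be the upper bound of $\pi_1 \circ q$ supplied by the operation of arity $E_0\,c_0$, and setting $g_0\,z := δ_{\pi_1(q\,z),\,i_0}(\pi_2(q\,z))$, I get $g_0 : E_0\,c_0 \to D_{i_0}$ with $[i_0, g_0\,z]_\sim = f(p\,z)$. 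To descend $g_0$ through $p$, apply \Cref{lem:wisc} a second time, with $(C_1,E_1)$, to the kernel pair $K$ of $p$: each $(z_1,z_2) \in K$ admits some $k > i_0$ equalizing $δ_{i_0,k}(g_0\,z_1)$ and $δ_{i_0,k}(g_0\,z_2)$; taking the upper bound via the appropriate $E_1$-operation yields a single $i^* > i_0$ on which $δ_{i_0,i^*}\circ g_0$ genuinely coequalizes $K$. Unique choice~\eqref{eq:uniquechoice} then factors this map through $p$ to give $g : B\,a \to D_{i^*}$ with $f = (ν_D)_{i^*} \circ g$.

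Injectivity is the simpler analogous argument: given $(ν_D)_i \circ f = (ν_D)_j \circ g$, each $x:B\,a$ provides some $k_x > i,j$ with $δ_{i,k_x}(f\,x) = δ_{j,k_x}(g\,x)$; a single application of \Cref{lem:wisc} with $(C_0,E_0)$, followed by taking the upper bound via the $E_0\,c_0$-operation, yields a common $k^*$ that witnesses $[i,f]_\sim = [j,g]_\sim$ in $\colim(D^{B\,a})$.

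The main obstacle is the descent step in surjectivity: one must coequalize the kernel of $p$ uniformly in size, not merely pointwise (which only gives coequalization modulo $\sim$). This forces applying \IWISC{} to a family covering the kernel-pair types themselves, not just $(A,B)$, so that $\Size$ has enough arities to produce the requisite upper bounds.
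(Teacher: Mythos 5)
Your proof is essentially the same as the paper's: you apply \IWISC{} twice (first to $(A,B)$, then to a family of kernel pairs of covers), build $\Size$ as a plump-ordered W-type whose arities include the resulting covering types plus nullary and binary operations, and then prove injectivity and surjectivity of the comparison map by using \Cref{lem:wisc} to pass from pointwise size witnesses to a covering family that has an upper bound in $\Size$. The surjectivity argument, factoring the descended map through the cover $p$ by coequalizing its kernel pair at a single size, is exactly the paper's mechanism (the paper invokes effective epimorphisms directly, where you invoke unique choice; these are interchangeable here), and your closing remark correctly identifies why the second \IWISC{} application is needed.

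Two small remarks. First, your signature omits the arities $B\,a$ for $a:A$; the paper's signature includes them, but as \Cref{rem:cocontinuity} points out those upper bounds are not used in the proof of this theorem (they matter only later, in the proof of \Cref{thm:main}), so your choice is sound for the statement at hand. Second, in the injectivity step, the claim that the upper bound $k^*$ of the chosen sizes automatically dominates $i$ and $j$ relies on the covering family being inhabited; the clean fix, as in the paper, is to additionally join with $i \sqcup^s j$ via the binary operation you already put in the signature, so that $i,j < k^*$ holds unconditionally, including when $B\,a$ is empty.
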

\begin{proof}
  The proof is in three parts,~\ref{cocont1}--\ref{cocont3}. In
  part~\ref{cocont1} we show how to find a suitable type of sizes
  $\Size$ using \Cref{lem:plump}; so $\Size$ is a W-type for a
  suitable signature derived from $A:𝓤$ and $B: A → 𝓤$. The signature
  contains arities arising from a wisc whose existence is guaranteed
  by \IWISC. In fact we need to consider not only the covers in such a % chktex 13
  wisc, but also a wisc for the family of (subtypes of) covers in this
  wisc, for the same reason that Swan uses ``2-cover
  bases''~\parencite[Section~3.2.2]{swan2018wtypes}.  Next, we have to
  prove that~\eqref{eq:comparison} is an isomorphism when $X:𝓤$ is
  $B\,a$, for any $a:A$; and for this it suffices to prove that it is~\ref{cocont2} injective and~\ref{cocont3} surjective. For then
  we can apply unique choice~\eqref{eq:uniquechoice} to construct a
  two-sided inverse for $κ_{D,X}$. By definition of
  $∼$~\eqref{eq:colim-construction}, $κ_{D,X}$ is injective iff
  \begin{multline}%
    \label{eq:injective}
    ∀(i, j : \Size). ∀(f:X →  D_i). ∀(f':X → D_j).\; (ν_D)_i∘ f
    = (ν_D)_j∘ f' → {}\\
    ∃(k : \Size).\; {i<k} \;∧\; {j<k} \;∧\; {δ_{i,k}∘ f =
      δ_{j,k}∘ f'}
  \end{multline}
  and surjective iff
  \begin{equation}%
    \label{eq:surjective}
    ∀(f:X → \colim D).∃(i : \Size).∃(f':X → D_i).\; (ν_D)_i∘ f'
    = f
  \end{equation}
  \begin{enumerate}[label={(\arabic*)}]
  \item\label{cocont1} \emph{Construction of $\Size$}: Given $A:𝓤$
    and $B: A → 𝓤$,  using \IWISC{} let $(C,F)$ be a
    wisc for the family $(A,B)$. For each $c:C$, $a:A$ and
    function $f:F\,c → B\,a$ we can form the kernel of $f$
    \begin{equation}%
      \label{eq:ker}
      \textstyle
      \Ker f \defeq ∑_{x,x': F\,c}(f\,x = f\,x')
    \end{equation}
    Applying \IWISC{} again, let $(C',F')$ be a wisc for this family
    of kernels indexed by $(c, a, f) : ∑_{(c,a):C × A}(F\,c →
    B\,a)$. Finally, consider the signature with
    $\Op = 𝟙 + 𝟙 + A + C + C'$ and $\Ar:\Op→𝓤$ the function mapping
    $0:𝟙$ in the first summand of $\Op$ to $𝟘$, $0:𝟙$ in the second
    summand to $𝟚$, $a:A$ in the third summand to $B\,a$, $c:C$ in the
    fourth summand to $F\,c$ and $c':C'$ in the fifth summand to
    $F'c'$. Then as in \Cref{lem:plump}, the W-type for the signature
    $(\Op,\Ar)$ is a type of sizes. Call it $\Size$.

  \item\label{cocont2} \emph{Proof of~\eqref{eq:injective}}: Recall
    that $X:𝓤$ is of the form $B\,a$ for some
    $a:A$. Suppose we have $f:X → D_i$ and $f':X → D_j$ satisfying
    $(ν_D)_i∘ f = (ν_D)_j∘ f'$. So by definition of $\colim D$~\eqref{eq:colim} we have
    \begin{equation}
      ∀(x:X).∃(k : \Size).\; i<k \;∧\; j<k\;∧\;
      δ_{i,k}(f\,x) = δ_{j,k}(f'\, x)
    \end{equation}
    Using the version of the wisc property of $(C,F)$
    from \Cref{lem:wisc}, there is $c:C$, $p:F\,c → X$ and
    $s:F\,c→ \Size$ with $p$ surjective and satisfying
    \begin{equation}%
      \label{eq:injective-proof}
      ∀(z:F\,c).\; i<s\,z \;∧\; j<s\,z \;∧\; δ_{i,s\,z}(f(p\,z)) =
      δ_{j,s\,z}(f'(p\,z))
    \end{equation}
    Since $F\,c$ is one of the arities in the signature of the W-type
    $\Size$, by \Cref{lem:plump} we have that
    $s:F\,c → \Size$ has an upper bound, i.e.~there is
    $k: \Size$ with $∀(z: F\,c).\;s\,z < k$; and by~\eqref{eq:<ub} we can assume further that $i< k$ and $j <
    k$. Furthermore, from~\eqref{eq:injective-proof} and~\eqref{eq:diag}
    we get
    $∀(z:F\,c).\; δ_{i,k}(f(p\,z)) =
    δ_{j,k}(f'(p\,z))$; but $p$ is surjective, so
    $δ_{i,k}∘f = δ_{j,k}∘f'$, as required for~\eqref{eq:injective}.

  \item\label{cocont3} \emph{Proof of~\eqref{eq:surjective}}: Suppose
    we have $f:X → \colim D$. Since the quotient function
    $[\_]_{∼}: Dᵢ → \colim D$ is surjective, using \Cref{lem:wisc}
    again, there is $c:C$, $p:F\,c → X$, $s:F\,c→ \Size$ and
    $g:∏_{z:F\,c}D_{s\,z}$ with $p$ surjective and satisfying
    $∀(z:F\,c).\; f(p\,z) = [ s\,z , g\,z]_{∼}$. As before, we have
    that $s:F\,c→ \Size$ has an upper bound, i.e.~there is $j: \Size$
    with $∀(z: F\,c).\;s\,z < j$. So we get a function $g': F\,c → Dⱼ$
    by defining $g'\,z \defeq δ_{s\,z,j}(g\,z)$ and hence
    \begin{equation}%
      \label{eq:surj-prf-1}
      f(p\,z) = [ s\,z , g z]_{∼} = [ j, δ_{s\,z,j}(g
      z)]_{∼} = (ν_D)ⱼ(g'z)
    \end{equation}
    Let $Y\defeq \Ker\,p$ be the kernel of $p$ as
    in~\eqref{eq:ker}. Then for any $(z,z',\_):Y$, since $p\,z =
    p\,z'$,  we have
    $(ν_D)ⱼ(g'z) = f(p\,z) = f(p\,z') =
    (ν_D)ⱼ(g'z')$ and hence
    $∃ k.\; j<k ∧ {δ_{j,k}(g'z) =
      δ_{j,k}(g'z')}$. So we can apply \Cref{lem:wisc} again
    to deduce the existence of $c':C'$, $⟨p₁',p₂',\_⟩:F'c'→ Y$
    and $s':F'c'→ \Size$ with $⟨p₁',p₂'⟩$ surjective and
    satisfying
    \begin{equation}%
      \label{eq:surj-proof1}
      ∀(z':F'c').\; {j<s'z'} \;∧\; δ_{j,s'z'}(g'(p₁'\,z')) =
      δ_{j,s'z'}(g'(p₂'\,z'))
    \end{equation}
    Since $F'c'$ is one of the arities in the signature of the W-type
    $\Size$, by \Cref{lem:plump} we have that
    the family of sizes $s':F'c'→ \Size$ has
    an upper bound, $i$ say; and by~\eqref{eq:<ub} we can assume
    $j < i$. Let $g'': F\,c → Dᵢ$ be
    $δ_{j,i} ∘ g'$. Thus from~\eqref{eq:surj-prf-1} we~have
    \begin{equation}\label{eq:surj-prf-2}
      f(p\,z) = (ν_D)ᵢ(g''z)
    \end{equation}
    and from~\eqref{eq:surj-proof1} also  $g''∘ p_1' = g''∘ p₂'$.
    So altogether we have functions
    \begin{equation}
      \xymatrix{{F\,c'}\ar[r]^{⟨p₁',p₂'⟩} & Y \ar@<1ex>[r]^{π_1}
        \ar@<-1ex>[r]_{π₂} & {F\,c} \ar[r]^{g''} & {D_i}}
    \end{equation}
    with
    $g''∘ π₁∘ ⟨p₁',p₂'⟩ = g'' ∘ p_1' = g''∘p_2'=g'' ∘ π_2 ∘
    ⟨p₁',p₂'⟩$. Since $⟨p₁',p₂'⟩$ is surjective, this implies
    $g''∘ π₁ = g'' ∘ π_2$. But since $Y=\Ker\, p$, $p$ is the
    coequalizer of $π_1$ and $π_2$ (since as we noted in
    \cref{sec:type-theory}, toposes have effective epimorphisms);
    therefore there is a unique function $f':X →D_i$ satisfying
    $f'∘p = g''$. Since $(ν_D)_i ∘ f' ∘ p = (ν_D)_i ∘ g'' = f ∘ p$
    by~\eqref{eq:surj-prf-2} and $p$ is surjective, we have
    $(ν_D)_i ∘ f' = f$, completing the proof of~\eqref{eq:surjective}
    and hence of \Cref{thm:cocontinuity}.
    \qedhere
  \end{enumerate}
\end{proof}

\begin{remark}%
  \label{rem:cocontinuity}
  Note that the type of sizes constructed in the proof of the theorem
  has the property that for any $a:A$, $\Size$ has upper bounds (with
  respect to $<$) for families of sizes indexed by $B\,a$ (because by
  construction $B\,a$ is one of the arities in the signature of the
  W-type $\Size$ and so \cref{lem:plump} applies). Such upper bounds
  are not needed in the above proof. We included them because they are
  needed below in the proof of \cref{thm:main} when proving property~\eqref{eq:qwintro2}.
\end{remark}
\begin{definition}[\textbf{Preservation of colimits by polynomial
    endofunctors}]%
  \label{def:pres-colim-polyendofunc}
  If $(D,δ)$ is a $\Size$-indexed diagram in $𝓤$, then we get
  another such, $(S_Σ∘D, S_Σ∘δ)$, by composing with the polynominal
  endofunctor $S_Σ:𝓤 → 𝓤$ associated with the signature $Σ$ as
  in~\eqref{def_S_endofunctor}:
  \begin{equation}
    \begin{aligned}
      (S_Σ∘D)_i    &\defeq S_Σ(D_i) &&(i: \Size)\\
      (S_Σ∘δ)_{i,j} &\defeq S_Σ(δ_{i,j}) &&(i,j: \Size)
    \end{aligned}
  \end{equation}
  Applying $S_Σ$ to~\eqref{eq:colim-cocone}
  gives a cocone under $(S_Σ∘D, S_Σ∘δ)$ with vertex $S_Σ(\colim D)$ and
  this induces a function in $𝓤$,
  \begin{equation}%
    \label{eq:comparison-poly}
    κ_{D,Σ} : \colim(S_Σ∘D) → S_Σ(\colim D)
  \end{equation}
  One says that \emph{the polynomial endofunctor $S_Σ$ preserves
    $\Size$-indexed colimits} if $κ_{D,Σ}$ is an isomorphism for all
  diagrams $(D,δ)$.
\end{definition}

\begin{corollary}[\textbf{Cocontinuity of $S_Σ$}]%
  \label{cor:poly-cocontinuity}
  In a topos with natural numbers object, given a signature
  $Σ=(A:𝓤, B:A→𝓤)$ in a universe $𝓤$ satisfying \IWISC{}, there exists a
  type of sizes $\Size$ with the property that the associated
  polynominal endofunctor $S_Σ:𝓤 → 𝓤$ preserves $\Size$-indexed
  colimits.
\end{corollary}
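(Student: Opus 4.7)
The plan is to apply \Cref{thm:cocontinuity} to the very family $(A, B)$ specifying the signature $Σ$, obtaining a type of sizes $\Size$ for which taking a power by each arity type $B\,a$ preserves $\Size$-indexed colimits, and then to argue that this same $\Size$ also witnesses the cocontinuity of the polynominal endofunctor $S_Σ$ itself. The bridge is simply that $S_Σ(X) = ∑_{a:A}(B\,a → X)$ is a dependent sum of such powers, and dependent sums always commute with colimits (being left adjoints in a topos); so the cocontinuity of $S_Σ$ is essentially equivalent to that of each individual power functor $(\_)^{B\,a}$.

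Rather than invoke the general categorical fact, I would construct a two-sided inverse to the comparison map $κ_{D,Σ}$ of~\eqref{eq:comparison-poly} directly, by establishing its injectivity and surjectivity and then appealing to unique choice~\eqref{eq:uniquechoice}, exactly as in the final step of the proof of \Cref{thm:cocontinuity}. For surjectivity, given $(a, f) : S_Σ(\colim D)$, statement~\eqref{eq:surjective} applied with $X = B\,a$ produces $i : \Size$ and $f' : B\,a → D_i$ with $(ν_D)_i ∘ f' = f$, so $[i,(a, f')]_{∼}$ is a preimage. For injectivity, an equation $κ_{D,Σ}[i,(a,b)]_{∼} = κ_{D,Σ}[j,(a',b')]_{∼}$ projects to $a = a'$ together with $(ν_D)_i ∘ b = (ν_D)_j ∘ b'$, whereupon~\eqref{eq:injective} applied with $X = B\,a$ supplies an upper bound $k$ of $i,j$ satisfying $δ_{i,k} ∘ b = δ_{j,k} ∘ b'$; unfolding the definition of the colimit quotient~\eqref{eq:colim-construction} then shows $[i,(a,b)]_{∼} = [j,(a',b')]_{∼}$ in $\colim(S_Σ ∘ D)$.

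There is no real obstacle here, since all of the substantive content — the construction of $\Size$ from \IWISC{} and the delicate manipulation of wiscs that produces upper bounds for families of sizes — has already been carried out in \Cref{thm:cocontinuity}. The only mild subtlety is the dependent typing inside $S_Σ$: one must first extract the equality of first components in $A$ on the nose from equality of pairs in $∑_{a:A}(B\,a → \colim D)$, so that the second components live in the same function type $B\,a → \colim D$ and can legitimately be compared with the ordinary identity type before invoking~\eqref{eq:injective}.
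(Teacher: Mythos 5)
Your proposal is correct and follows essentially the same route as the paper's own proof: apply \Cref{thm:cocontinuity} to the family $(A,B)$, then show $κ_{D,Σ}$ is injective and surjective by reducing to properties~\eqref{eq:injective} and~\eqref{eq:surjective} with $X = B\,a$, and finally invoke unique choice~\eqref{eq:uniquechoice}. The observation about first extracting $a = a'$ so that the second components become comparable is also the step the paper takes.
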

\begin{proof}
  We apply \Cref{thm:cocontinuity} to find a type of sizes, $\Size$
  for the given $A:𝓤$ and $B:A → 𝓤$. So for each diagram $(D,δ)$ in
  $𝓤$, we have properties~\eqref{eq:injective} and~\eqref{eq:surjective} when $X$ is $B\,a$, for any $a:A$.  The
  function~\eqref{eq:comparison-poly} satisfies for all $i:\Size$,
  $a:A$ and $b:B\,a → D_i$
  \begin{equation}
    κ_{D,Σ} [i,(a,b)]_{∼} = (a, (ν_D)_i∘ b)
  \end{equation}
  We have that $κ_{D,Σ}$ is injective, because if
  $(a, (ν_D)_i∘b) = (a', (ν_D)_j∘b')$, then $a=a'$ and
  $(ν_D)_i∘b=(ν_D)_j∘b'$; but then from~\eqref{eq:injective} with
  $X=B\,a=B\,a'$, there is some $k$ with $i,j<k$ and
  $δ_{i,k}∘b = δ_{j,k}∘b'$, so that $[i,(a,b)]_{∼}$ and
  $[ j,(a',b')]_{∼}$ are equal terms of $\colim(S_Σ∘D)$.

  Furthermore, $κ_{D,Σ}$ is surjective because given
  $(a,b):S_Σ(\colim D)$, from~\eqref{eq:surjective} with $X=B\,a$,
  there exist $i:\Size$ and $b':B\,a → D_i$ with $(ν_D)_i∘b' = b$; so
  $[i,(a,b')]_{∼}$ in $\colim(S_Σ∘D)$ is mapped by $κ_{D,Σ}$ to
  $(a,b)$.

  Since $κ_{D,Σ}$ is both injective and surjective, we can apply
  unique choice~\eqref{eq:uniquechoice} to conclude that it is an
  isomorphism.
\end{proof}

\begin{remark}[\textbf{Cocontinuity for polynomial endofunctors on
    $𝓤^I$}]%
  \label{rem:indexed-cocontinuity}
  There are indexed versions of \cref{thm:cocontinuity} and
  \Cref{cor:poly-cocontinuity}. To state them for an indexing type
  $I:𝓤$, we need to consider $\Size$-indexed diagrams in $𝓤^I$ and
  their colimits. Since the latter are given pointwise by colimits in
  $𝓤$, this makes what follows a simple extension of the previous
  development.

  Given a $\Size$-indexed diagram $(D,δ)$ in $𝓤ᴵ$
  and a family $X:𝓤ᴵ$, the indexed version of~\eqref{eq:pres-colim-powers} is
  a \emph{power} diagram $(D^X,δ^X)$ in $𝓤$ with
  \begin{equation}
    \begin{aligned}
      (D^X)_i &\defeq X ⇁ D_i &&(i: \Size)\\
      (δ^X)_{i,j} &\defeq ƛ(f:X ⇁ D_i)· δ_{i,j}∘ f &&(i,j: \Size)
    \end{aligned}
  \end{equation}
  Post-composition with $(ν_D)_i$ gives a cocone under the diagram
  $D^X$ with vertex $X ⇁ \colim D$ and this induces a
  function which is the indexed version of~\eqref{eq:comparison}:
  \begin{equation}
    \begin{aligned}
      &κ_{D,X}: \colim(D^X) ⇁ (X ⇁ \colim D)\\
      &κ_{D,X}\,[i,f]_{∼} = (ν_D)_i∘ f
    \end{aligned}
  \end{equation}
  One says that \emph{taking a power by $X:𝓤ᴵ$ preserves $\Size$-indexed
    colimits} if for all $\Size$-indexed diagrams $(D,δ)$ in $𝓤ᴵ$ the
  function $κ_{D,X}$ is an isomorphism. Then we have:
  \begin{quote}
    \emph{If $𝓤$ is a universe satisfying \IWISC{} in a topos with natural
    numbers object, then given $A,I:𝓤$ and $B: A → 𝓤ᴵ$, there exists
    a type of sizes $\Size : 𝓤$ with the property that for all $a:A$,
    taking a power by a family $B\,a : 𝓤ᴵ$ preserves $\Size$-indexed
    colimits.}
  \end{quote}
  The proof of this is similar to the proof of \cref{thm:cocontinuity}
  and can be found in the Agda development accompanying this
  paper~\parencite{fiore2021agdacode}.

  Given a $\Size$-indexed diagram $(D,δ)$ in $𝓤ᴵ$, and an $I$-indexed
  signature $Σ$ as in~\eqref{eq:qwi-types-endo}, then we get another
  \(\Size\)-indexed diagram, $(S_Σ∘D, S_Σ∘δ)$, by composing with the
  polynominal endofunctor $S_Σ:𝓤^I → 𝓤^I$:
  \begin{equation}
    \begin{aligned}
      (S_Σ∘D)_i     &\defeq S_Σ(D_i) &&(i: \Size)\\
      (S_Σ∘δ)_{i,j} &\defeq S_Σ(δ_{i,j}) &&(i,j: \Size)
    \end{aligned}
  \end{equation}
  Applying $S_Σ$ to the $I$-indexed version of~\eqref{eq:colim-cocone}
  gives a cocone under $(S_Σ∘D, S_Σ∘δ)$ with vertex $S_Σ(\colim D)$ and
  this induces a family of functions in $𝓤^I$
  \begin{equation}
    κ_{D,Σ} : \colim(S_Σ∘D) ⇁  S_Σ(\colim D)
  \end{equation}
  One says that \emph{the polynomial endofunctor $S_Σ:𝓤^I → 𝓤^I$
    preserves $\Size$-indexed colimits} if $κ_{D,Σ}$ is a family of
  isomorphisms, for all diagrams $(D,δ)$. Then the indexed
  generalization of \Cref{cor:poly-cocontinuity} is:
  \begin{quote}
    \emph{In a topos with natural numbers object, given an indexed signature
      $Σ=(I,A,B)$ in a universe $𝓤$ satisfying \IWISC{}, there exists a
      type of sizes $\Size$ with the property that the associated
      polynominal endofunctor $S_Σ:𝓤^I→ 𝓤^I$ preserves $\Size$-indexed
      colimits.}
  \end{quote}
  This is proved as a corollary of the indexed version of
  \cref{thm:cocontinuity} given above, and can also be found in the
  accompanying Agda development~\parencite{fiore2021agdacode}.

\end{remark}
%%% Local Variables:
%%% mode: latex
%%% TeX-master: "../qw-journal"
%%% End:

\FloatBarrier
\section{Construction of QWI-types}%
\label{sec:construction}

We aim to prove the following theorem about existence of QWI-types~(\Cref{def:qwit}):
\begin{theorem}%
  \label{thm:main}
  Suppose $𝓤$ is a universe satisfying \IWISC{} in a topos with
  natural numbers object. Then for every indexed signature and system
  of equations in $𝓤$, there exists a QWI-type for it.
\end{theorem}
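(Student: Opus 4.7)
The plan is to construct $\QW$ as the colimit of a well-founded family of size-indexed approximations. I will describe the non-indexed case $(Σ,ε)$; the indexed case follows the same scheme using the indexed version of cocontinuity sketched in \Cref{rem:indexed-cocontinuity}. First, I would apply \Cref{cor:poly-cocontinuity} to the signature $Σ=(A,B)$, taking care to enlarge the underlying W-type signature defining $\Size$ so that each variable arity $V\,e$ of the equational system also occurs among its arities; by \Cref{lem:plump} together with \Cref{rem:cocontinuity}, the resulting $\Size$ then admits upper bounds (with respect to $<$) for any family indexed by some $B\,a$ or some $V\,e$.

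Next, using well-founded recursion (\Cref{prop:wfrec}), I would define simultaneously a family $Q : \Size → 𝓤$ and coherent connecting maps $δ_{i,i'} : Q_i → Q_{i'}$ for $i<i'$. At stage $i$, write $P_i$ for the colimit of the sub-diagram $(Q_j)_{j<i}$, and set $Q_i \defeq S_Σ(P_i)/{\sim_i}$, where $\sim_i$ is the least equivalence relation that (a) is compatible with the previously-defined coprojections (so that the composites $Q_j → P_i → S_Σ(P_i)/{\sim_i}$ supply the required $δ_{j,i}$), and (b) identifies, for every $e:E$ and $ρ:V\,e → P_i$, representatives of the two sides of the corresponding instance of the equation. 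The maps $δ_{i,i'}$ for $i<i'$ are then induced by functoriality of $S_Σ$ applied to the canonical $P_i → P_{i'}$ and the universal property of the quotient.

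I would then define $\QW \defeq \colim Q$. The algebra map $\qwintro$ is obtained from the cocontinuity isomorphism $S_Σ(\QW) \cong \colim(S_Σ ∘ Q)$ together with the evident cocompatible family $S_Σ(Q_i) → Q_{\Suc{i}} → \QW$. For $\qwequate$, any substitution $ρ:V\,e → \QW$ factors through some $Q_i$ because $\Size$ has $V\,e$-indexed upper bounds, and the equality between the two sides of $e$ was already forced into $\sim_{\Suc{i}}$ at the construction of $Q_{\Suc{i}}$. For initiality, given any $(X,α):\Alg_Σ$ with a proof $p : \Sat_{α,ε}(X)$, I would define a recursor stagewise by well-founded recursion on sizes: from $r_j:Q_j → X$ for $j<i$, assemble a map $P_i → X$ via the universal property of the colimit; apply $α ∘ S_Σ(\cdot)$ to obtain a candidate $S_Σ(P_i) → X$; and factor through $Q_i$ using $p$. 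These $r_i$ assemble into $\qwrec\,X\,α\,p:\QW → X$; $\qwrechom$ is immediate from the construction and $\qwuniq$ follows by a further size-indexed induction, using the uniqueness clause of \Cref{prop:wfrec}.

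The main obstacle will be the second step: calibrating $\sim_i$ so that it is small enough for the initiality argument to go through (no undue identifications are added at stage $i$), yet large enough that every substituted instance of $ε$ is identified in $\QW$. Cocontinuity of $S_Σ$ is what makes this balance possible: it ensures that every element of $S_Σ(\QW)$ is already represented at some stage $S_Σ(P_i)$, so that the equations imposed stagewise suffice globally and no new equalities arise on passing to the colimit. Careful bookkeeping of the connecting maps $δ_{i,i'}$ and functoriality of the assignment $i \mapsto Q_i$ in $i$ is routine but substantial, and is precisely what the accompanying Agda formalisation referred to in the introduction makes tractable.
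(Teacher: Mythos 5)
Your high-level strategy matches the paper's: instantiate the cocontinuity results (\Cref{thm:cocontinuity}, \Cref{cor:poly-cocontinuity}) with the enlarged signature $A + E$ so that both the operation arities $B\,a$ and the equation-variable arities $V\,e$ occur among the arities defining $\Size$; build a $\Size$-indexed family of approximations by well-founded recursion (\Cref{prop:wfrec}); take its colimit; obtain $\qwintro$ from cocontinuity; obtain $\qwequate$ by factoring substitutions through a stage; and derive $\qwrec$, $\qwrechom$ and $\qwuniq$ by further well-founded arguments. All of that is right, and even the uniqueness-of-fixed-points observation that underwrites the recursion is present in the paper.

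There is, however, a real gap in how you define the stages, and it is not mere bookkeeping. Setting $Q_i \defeq S_\Sigma(P_i)/{\sim_i}$ with $P_i = \colim_{j<i} Q_j$ has two problems. First, the claimed connecting maps $\delta_{j,i}$ as composites $Q_j \to P_i \to S_\Sigma(P_i)/{\sim_i}$ do not typecheck: $S_\Sigma$ is a bare polynomial endofunctor, so there is no canonical $P_i \to S_\Sigma(P_i)$ to provide the second arrow. Second, and more fundamentally, the equation instances you want $\sim_i$ to identify do not live in $S_\Sigma(P_i)$. The sides $l\,e$ and $r\,e$ are elements of $T_\Sigma(V\,e)$, i.e.\ terms of arbitrary depth in the \emph{free monad} on $S_\Sigma$, so their substituted instances $T_\Sigma\,\rho\,(l\,e)$ and $T_\Sigma\,\rho\,(r\,e)$ are elements of $T_\Sigma(P_i)$, not $S_\Sigma(P_i)$. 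To speak of ``representatives'' in $Q_i$ you would have to fold an arbitrary-depth term into the quotient, i.e.\ equip each $Q_i$ with an $S_\Sigma$-algebra structure --- exactly the step that needs choice for infinitary arities and that the whole construction is designed to avoid. Only the colimit should receive an algebra structure, via cocontinuity; the individual stages must not need one.

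The paper resolves both issues at once by working with $T_\Sigma$ rather than $S_\Sigma$, and with the disjoint sum $\sum_{j<i} T_\Sigma(D_j)$ rather than one layer of $S_\Sigma$ applied to a colimit. Each stage $D_i$ is the quotient of $\sum_{j<i} T_\Sigma(D_j)$ by a relation $R_i$ with three clauses: one for the equations (your clause (b)) and two coherence clauses $q\eta$, $q\sigma$ that let a term indexed by an earlier size be re-expressed at a later one (compare the mutual inductive definition in \Cref{fig:construction-agda}). The unit $\eta : X \to T_\Sigma(X)$ then supplies the connecting maps $\delta_{j,i} = \tau_{j,i}\circ\eta$ where your map $P_i \to S_\Sigma(P_i)$ does not exist, and the presence of full $T_\Sigma$ means every substituted equation instance lives directly in the stage before quotienting, with no folding needed. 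If you replace $S_\Sigma(P_i)$ by the sum-over-$T_\Sigma$ construction (or at least by $T_\Sigma(P_i)$) and add the coherence clauses to $\sim_i$, the rest of your argument goes through essentially as written.
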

The proof follows from the cocontinuity results of the previous
section (the indexed versions of \Cref{thm:cocontinuity} and
\Cref{cor:poly-cocontinuity} given in
\cref{rem:indexed-cocontinuity}). For simplicity, here we only give
the proof for QW-types (\cref{sec:qwt}), that is, for the non-indexed
$I=𝟙$ case of signatures. The general case is similar, but
notationally more involved since there are indexes ranging both over
an index type and over sizes. The proof for the general, indexed case
can be found in our Agda development~\parencite{fiore2021agdacode}.

\emph{So in this section we fix a signature $Σ=(A:𝓤,B:A→𝓤)$ and system of
equations $ε=(E:𝓤,V:E→𝓤, l,r: ∏_{e:E}T_Σ(V\,e))$ over it in some
universe $𝓤$ satisfying \IWISC{} in a topos with natural numbers
object.}

\subsection{Motivating the construction}\label{motivatingtheconstruction}

We noted at the start of \Cref{sec:wisc} that QW-types can be
constructed in the category of ZFC sets as initial algebras for
possibly infinitary equational theories by first forming the W-type of
terms of the theory and then quotienting that by the congruence
relation generated by the equations of the theory. AC is used when
constructing the algebra structure of the quotient, because the
signature's arities may be infinite. To avoid this use of AC, instead
of forming all terms in one go and then quotienting, we consider
interleaving quotient formation with the construction of terms of free
algebras for equational systems (cf.~the categorical construction by
\textcite{fiore2009construction}).

\begin{figure}
  \begin{mdframed}
    \[
      \begin{array}{l}
        \mathtt{mutual}\\
        \quad\mathtt{data}\; W : 𝓤\;\mathtt{where}\\
        \quad\quad qτ : T_Σ(W/{∼}) → W\\
        \quad\mathtt{data}\;\_∼\_: W→W→\Prop\;\mathtt{where}\\
        \quad\quad qε : ∀(e : E).∀(ρ : V\; e → W/{∼}).\;
        qτ(T_Σ\,ρ\,(l\,e)) ∼ qτ(T_Σ\,ρ\,(r\,e))\\
        \quad\quad qη : ∀(t:T_Σ(W/{∼})).\; qτ(η\,[qτ\,t]_{∼}) ∼ qτ\,t\\
        \quad\quad qσ : ∀(a:A).∀(b:B\,a→ T_Σ(W/{∼})).\;
        qτ(σ(a,b)) ∼ qτ(σ(a,η∘[\_]_{∼}∘qτ∘b))\\
        \QW = W/{∼}
      \end{array}
    \]
  \end{mdframed}
  \caption{First attempt at constructing QW-types}%
  \label{fig:construction-agda}
\end{figure}

\Cref{fig:construction-agda} gives the idea, using the constructors
$η$ and $σ$ from~\eqref{eq:Tconstructors} and Agda-like notation for
inductive definitions. We would like to construct the QW-type for
$(Σ,ε)$ as a quotient $\QW\defeq W/{∼}$, but now the type $W : 𝓤$ and
the relation $\_∼\_:W→W→\Prop$ are mutually inductively defined, with
constructors as indicated in the figure. Note that whereas the
construction in ZFC uses AC to get an $S_Σ$-algebra structure for
$\QW$, here we get one trivially from the constructor
$qτ:T_Σ(W/{∼})→ W$:
\begin{equation}
  S_Σ(\QW) ≡ S_Σ (W/{∼}) \xrightarrow{S_Σ\,η} S_Σ(T_Σ(W/{∼}))
  \xrightarrow{σ} T_Σ(W/{∼}) \xrightarrow{qτ} W \xrightarrow{[\_]_{∼}}
  W/{∼} ≡ \QW
\end{equation}
Furthermore the property $qε$ of $∼$ in \Cref{fig:construction-agda}
ensures that the $S_Σ$-algebra $W/{∼}$ satisfies the equational system
$ε$. The use of $T_Σ$ rather than $S_Σ$ in the domain of $qτ$ seems
necessary for this method of construction to go through (once we have
fixed up the problems mentioned in the next paragraph); but it does
mean that as well as $qε$, we have to impose the conditions $qη$ and
$qσ$ to ensure that $W/{∼}$ has a $S_Σ$-algebra structure, or
equivalently, an algebra structure for the monad $T_Σ$.

Note that the domain of the constructor $qτ$ combines $T_Σ(\_)$ with
$\_/\_$. While the first is unproblematic for inductive definitions,
the second is not: if one thinks of the semantics of inductively
defined types in terms of initial algebras of endofunctors, it is not
clear what endofunctor (in some class known to have initial algebras)
is involved here, given that both arguments to $\_/\_$ are being
defined simultaneously. Agda uses a notion of ``strict positivity'' as
a conservative approximation for such a class of functors; and one can
instruct Agda to regard quotienting as a strictly positive operation
through the use of its \texttt{POLARITY} declarations. If one does so,
then a definition like the one in \Cref{fig:construction-agda} is
accepted by Agda. The semantic justification for regarding quotients
as strictly positive constructions needs further investigation. We
avoid the need for that here and replace the attempt to define $W$ and
$∼$ inductively by a size-indexed version that uses definition by
well-founded recursion over a type of sizes as in the previous
section. This also avoids another difficulty with
\Cref{fig:construction-agda}: even if one can define $W$ and $∼$
inductively, one still has to verify that $W/{∼}$ has the universal
property~\eqref{def_qwrec}--\eqref{def_qwuniq} required of a
QW-type. In particular, there is an obvious recursive definition of
$\qwrec$ following the shape of the inductive definition in
\Cref{fig:construction-agda}, but it is not at all clear why this
recursive definition is terminating (i.e.~gives a well-defined, total
function). Well-founded recursion over sizes will solve this problem
as well.

\subsection{QW-type via sizes}%
\label{sec:qwi-sized}

We are given a signature $Σ=(A:𝓤,B:A→𝓤)$ and system of equations
$ε=(E:𝓤,V:E→𝓤, l,r: ∏_{e:E}T_Σ(V\,e))$.  Let $\Size:𝓤$ be the type of
sizes whose existence is guaranteed by \Cref{thm:cocontinuity} when in
the theorem we take $A$ to be $AE \defeq A + E$ and $B$
to be the function $BV:AE→𝓤$ mapping $a:A$ to $B\,a$ and $e:E$ to
$V\,e$. It follows (as in the proof of \Cref{cor:poly-cocontinuity})
that $S_Σ$ preserves $\Size$-indexed colimits.

\begin{definition}%
  \label{def:sized-alg}
  A \emph{$\Size$-indexed $Σ$-structure} in $𝓤$ is specified by a family
  of types $D:\Size→ 𝓤$ equipped with functions
  \begin{equation}
    τ_{j,i}:T_Σ(D_j) → D_i \quad\text{for all $i,j:\Size$ with $j<i$}
  \end{equation}
  Similarly, for each $i:\Size$, a \emph{$(↓ i)$-indexed
  $Σ$-structure} is the same thing, except with $D$ only defined on
  the subsemicategory $↓ i$~\eqref{eq:down-seg} rather than the whole
  of $\Size$. Clearly, given a $\Size$-indexed $Σ$-structure $(D,τ)$,
  for each $i:\Size$ we get a $(↓ i)$-indexed one $(D\;(↓ i), τ\;(↓
  i))$ by restriction.
\end{definition}
If $i:\Size$ and $(D,τ)$ is a $(↓ i)$-indexed $Σ$-structure, let $◇_i D :
𝓤$ be the quotient type (see \cref{fig:quot})
\begin{equation}%
  \label{eq:diamond}
  ◇_i D\;\defeq\;
  \left.\left(∑_{j<i} T_Σ\,D_j\right)\right/R_i
\end{equation}
with the relation $R_i: (∑_{j<i}T_Σ\,D_j)→(∑_{j<i} T_Σ\,D_j)→\Prop$
defined by:
\begin{multline}%
  \label{eq:diamond-rel}
  R_i\,(j,t)\,(k,t') \;\defeq\; {}\\
  \begin{aligned}[t]
    & &&({k=j} \;∧\; ∃(e:E).∃(ρ:V\,e→D_j).\;{t=T_Σ\,ρ\,(l\,e)} \;∧\;
    {t'=T_Σ\,ρ\,(r\,e)})\\
    &∨ &&({k<j} \;∧\; t = η(τ_{k,j}\,t'))\\
    &∨ &&({k<j} \;∧\; ∃(a:A).∃(b:B\,a→ T_ΣD_k).\; {t= σ(a,b)} \;∧\;
    {t'=σ(a,η∘τ_{k,j}∘b)})
  \end{aligned}
\end{multline}
(The three clauses in the above definition correspond to the three
constructors $qε$, $qη$ and $qσ$ in \Cref{fig:construction-agda}.)  We
will see that the QW-type for $(Σ,ε)$ can be constructed from
$\Size$-indexed $Σ$-structures $(D,τ)$ satisfying the following
fixed-point property.

\begin{definition}%
  \label{def:fixed}
  A $\Size$-indexed $Σ$-structure $(D,τ)$ is a \emph{$◇$-fixed point}
  if for all $i:\Size$
  \begin{equation}
    D_i = ◇_i(D\;(↓ i))\label{eq:fixed1}
  \end{equation}
  and for all $j<i$ and $t:  T_ΣD_j$
  \begin{equation}
    τ_{j,i}\,t = [j,t]_{R_i}\label{eq:fixed2}
  \end{equation}
\end{definition}
Suppose that $(D,τ)$ is a $◇$-fixed point. Because of~\eqref{eq:fixed1} and~\eqref{eq:fixed2}, for $i,j:\Size$ with $i<j$
the functions
\begin{equation}%
  \label{eq:fixed-act}
  δ_{i,j} : D_i → D_j \qquad
  δ_{i,j} \defeq τ_{i,j} ∘ η
\end{equation}
satisfy
\begin{equation}%
  \label{eq:fixed-diag}
  δ_{i,j}([k,t]_{R_i}) = [k,t]_{R_j} \quad\text{(for all $k<i$ and
    $t:T_Σ\,D_k$)}
\end{equation}
In particular they satisfy composition~\eqref{eq:diag} and so $(D,δ)$
is a $\Size$-indexed diagram in $𝓤$ whose colimit we can form as in
\Cref{sec:colimit}.  We prove that this colimit
\begin{equation}%
  \label{eq:qw-type}
  \QW \;\defeq\; \colim D
\end{equation}
has the structure~\eqref{def_qwintro}--\eqref{def_qwuniq} of a
QW-type for the signature $(Σ,ε)$.

\begin{description}%[style=nextline]
  \setlength\mathindent{\leftmargin+\originalparindent}
\item[$\qwintro$] Given how we chose $\Size$ at the start of this
  section, from the (proof of) \Cref{cor:poly-cocontinuity} we have
  that $S_Σ:𝓤→𝓤$ preserves $\Size$-indexed colimits. We claim that the
  functions%
  \begin{equation}%
    \label{eq:qwintro}
    S_Σ(D_i) \smashxrarrow{S_Ση} S_Σ(T_Σ\,D_i)\smashxrarrow{σ} T_Σ\,D_i
    \smashxrarrow{τ_{i,\Suc{i}}} D_{\Suc{i}}
    \smashxrarrow{(ν_D)_{\Suc{i}}}
    \colim D
  \end{equation}
  form a cocone under the diagram $S_Σ∘D$ and hence induce a function
  $\colim(S_Σ∘D) → \colim D$; then we obtain
  $\qwintro: S_Σ(\colim D)→ \colim D$ by composing this with the
  isomorphism $S_Σ(\colim D) ≅ \colim(S_Σ∘ D)$ from
  \Cref{cor:poly-cocontinuity}. That the functions in~\eqref{eq:qwintro} form a cocone for $S_Σ∘D$ follows from the fact
  that the functions in~\eqref{eq:fixed-act} satisfy for all sizes
  $i<j$
  \begin{equation}%
    \label{eq:qwintro2}
    ∀(t:T_Σ(Dᵢ)).∃(k:\Size).\; j<k \;∧\; τ_{j,k}(T_Σ\,δ_{i,j}(t)) =
    τ_{i,k}(t)
  \end{equation}
  from which it follows that
  $(ν_D)_{\Suc{j}}∘τ_{j, \Suc{j}} ∘ T_Σ\,δ_{i,j} = (ν_D)_{\Suc{i}}∘τ_{i,
  \Suc{i}}$ and hence also the cocone property. Property~\eqref{eq:qwintro2}
  can be proved by induction on the structure of $t:T_Σ(Dᵢ)$, with the
  $t=σ(a,b)$ case of~\eqref{eq:Tconstructors} proved using the fact
  that $\Size$ has $<$-upper bounds for families of sizes indexed by
  $B\,a$ (\Cref{rem:cocontinuity}).

\item[$\qwequate$] Given $e:E$ and $ρ:V\,e → \colim D$, by
  \Cref{thm:cocontinuity} there exist $i:\Size$ and $ρ':V\,e → D_i$
  with $\rho = (ν_D)_i ∘ ρ'$. By standard properties of the bind
  operation $⋙$~\eqref{def_bind} (which depends implicitly on the
  $S_Σ$-algebra structure $\qwintro$ that we have just constructed), we
  have
  \begin{equation}
    \begin{array}{@{}r@{}c@{}r@{}c@{}l@{}r@{}}
      {}  (l\,e ⋙  ρ)
          &{}={}& (l\,e ⋙ (ν_D)_i ∘ ρ')
          &{}={}& (T_Σ\,ρ'\,&(l\,e) ⋙ (ν_D)_i)
      \\  (r\,e ⋙  ρ)
          &{}={}& (r\,e ⋙ (ν_D)_i ∘ ρ')
          &{}={}& (T_Σ\,ρ'\,&(r\,e) ⋙ (ν_D)_i)
    \end{array}
  \end{equation}
  From the definition of $\qwintro$ and~\eqref{eq:fixed2} it follows
  that for any $t:T_Σ\,D_i$, there is a proof of
  \begin{equation}%
    \label{eq:qwequate}
    (t ⋙ (ν_D)_i) = (ν_D)_{\Suc{i}}(τ_{i,\Suc{i}}\,t)
  \end{equation}
  So from above we have that
  $(l\,e ⋙ ρ) \;=\; (ν_D)_{\Suc{i}}(τ_{i,\Suc{i}}(T_Σ ρ'(l \,e)))$ and
  $(r\,e ⋙ ρ) \;=\; (ν_D)_{\Suc{i}}(τ_{i,\Suc{i}}(T_Σ ρ'(r \,e)))$.
  Since from~\eqref{eq:fixed2} and the first clause in the definition
  of $Rᵢ$~\eqref{eq:diamond-rel} we also have
  $τ_{i,\Suc{i}}(T_Σ ρ'(l\,e)) = τ_{i,\Suc{i}}(T_Σ ρ'(r\,e))$, it
  follows that there is a proof of $(l\,e ⋙ ρ) = (r\,e ⋙ ρ)$.

\item[$\qwrec$] Given an $S_Σ$-algebra $(X,α):∑_{X:𝓤}(S_Σ\,X→X)$
  satisfying the system of equations $ε$, the function
  $\qwrec:\colim D → X$ is induced by a cocone of functions
  $r:\prod_i(D_i→ X)$ under the diagram $D$, defined by well-founded
  recursion (\Cref{prop:wfrec}). More precisely, a strengthened
  ``recursion hypothesis'' is needed: instead of $\prod_i(D_i→ X)$ we
  use $∏_i\,F_i$ where
  \begin{equation}
    F_i \;\defeq\; \{f:D_i→ X ∣ ∀j <i.∀(t :T_Σ\,D_j).\; (t ⋙
      (f∘δ_{j,i})) = f ([j,t]_{R_i})\}
  \end{equation}
  (The definition uses $α$ implicitly in $ ⋙$ and relies on the
  fact~\eqref{eq:fixed1} that $D_i = ◇_i(D\;(↓ i))$.) For each
  $i:\Size$, if we have $r_j:F_j$ for all $j<i$, then we get a function
  $r_i:F_i$ well-defined by
  \begin{equation}%
    \label{eq:qwrec}
    r_i([j,t]_{R_i}) \defeq t ⋙ r_j \quad\text{for all $j<i$ and
      $t:T_Σ\,D_j$}
  \end{equation}
  (The defining property of $F_i$ is needed to see that the right-hand
  side of this definition respects the relation $R_i$.) Hence by
  well-founded recursion (\Cref{prop:wfrec}) we get an element
  $r:∏_i\,F_i$. One can prove $∀i.∀j<i.\; r_j = r_i∘δ_{j,i}$ by
  well-founded induction~\eqref{eq:<iswf}; so $r$ is a cocone and
  induces a function $\qwrec:\colim D → X$.

\item[$\qwrechom$] We noted at the start of this section that by
  choice of $\Size$, the functor $S_Σ$ preserves $\Size$-indexed
  colimits.  So to prove that
  \begin{equation}
    \xymatrix{%
      {S_Σ(\colim D)} \ar[r]^<<<<<{\qwintro} \ar[d]_{S_Σ\qwrec}
      & {\colim D} \ar[d]^{\qwrec}\\
      {S_Σ\,X}\ar[r]_{α} &X}
  \end{equation}
  commutes, by the definitions of $\qwintro$ and
  $\qwrec$, it suffices to prove that
  \begin{equation}
    \xymatrix{%
      {S_Σ\,D_i} \ar[rr]^{τ_{i,\Suc{i}}∘σ∘S_Σ η} \ar[d]_{S_Σ\,r_i}
      && {D_{\Suc{i}}} \ar[d]^{r_{\Suc{i}}}\\
      {S_Σ\,X}\ar[rr]_{α} &&X}
  \end{equation}
  does for each $i:\Size$. But each
  $(a,b): S_Σ\,D_i$ is mapped by $τ_{i,\Suc{i}}∘σ∘S_Σ η$ to
  $[i,σ(a,η∘b)]_{R_{\Suc{i}}}$ (using the fact that
  $D_{\Suc{i}} = ◇_{\Suc{i}}(D\;(↓ \Suc{i}))$); and by~\eqref{eq:qwrec},
  that is mapped by $r_{\Suc{i}}$ to $σ(a,η∘b) ⋙ r_i$, which is indeed
  equal to $α(S_Σ\,r_i(a,b))$ by definition of $⋙$~\eqref{def_bind}
  and the action of $S_Σ$ on
  functions~\eqref{def_S_endofunctor-action}.

\item[$\qwuniq$] If $h : \colim D → X$ is a morphism of
  $S_Σ$-algebras, then one can prove by well-founded induction for $<$
  that $∀ i.\; h∘(\nu_D)_i = r_i$ holds: for if we have $h∘(\nu_D)_j =
  r_j$ for all $j<i$, then for any $[j,t]_{R_i}$ in $D_i = ◇_i(D\;(↓ i))$
  \[
    \begin{array}{r@{}c@{}l@{\hspace{2em}}l}
      {}  r_i([j,t]_{R_i})
          &{}≝{}& t ⋙ r_j
          &\text{\eqref{eq:qwrec}}
      \\  &=& t ⋙ (h∘(νD)_j)
          &\text{by induction hypothesis}
      \\  &=& h(t ⋙ (\nu_D)_j)
          &\text{since $h$ is a morphism of $S_Σ$-algebras}
      \\  &=& h((ν_D)_{\Suc{j}}[j,t]_{R_{\Suc{j}}})
          &\text{by~\eqref{eq:qwequate} and~\eqref{eq:fixed2}}
      \\  &=& h((ν_D)_i[j,t]_{R_i})
          &\parbox[c]{0.43\textwidth}{\raggedright{}since by~\eqref{eq:fixed-diag}, $δ_{\Suc{j}, \Suc{j} ⊔ˢ
            i}([j,t]_{R_{\Suc{j}}}) = [j, t]_{R_{\Suc{j} ⊔ˢ i}} = δ_{i,\Suc{j} % chktex 36
            ⊔ˢ i}([j,t]_{R_i})$.} % chktex 36
    \end{array}
  \]
  So by well-founded induction, $h∘(\nu_D)_i = r_i$ holds for all
  $i:\Size$, and hence by definition of $\qwrec$ and the uniqueness
  part of the universal property of colimits we have $h=\qwrec$.

\end{description}
Thus we have proved:
\begin{proposition}%
  \label{prop:existence}
  Let $\Size$ be the type of sizes defined at the start of this
  section, whose existence is guaranteed by \Cref{thm:cocontinuity}.
  If the $\Size$-indexed $Σ$-structure $(D,τ)$ is a $◇$-fixed point,
  then $\colim D$ has the structure of a QW-type for the signature
  $(Σ,ε)$. \qed%
\end{proposition}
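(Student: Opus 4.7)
The plan is to exhibit the five required pieces of structure $\qwintro$, $\qwequate$, $\qwrec$, $\qwrechom$, $\qwuniq$ on $\colim D$, leveraging two ingredients: the fixed-point property, which rewrites $D_i$ as the quotient $◇_i(D\;(↓ i))$ and equates each $τ_{j,i}\,t$ with $[j,t]_{R_i}$, and the cocontinuity of $S_Σ$ on $\Size$-indexed colimits (\Cref{cor:poly-cocontinuity}), which holds because $\Size$ was built from the combined arities of $A$ and $E$ at the start of the section. The maps $δ_{i,j} \defeq τ_{i,j} ∘ η$ already give $(D,δ)$ the structure of a $\Size$-indexed diagram, so the colimit is well-defined and equipped with universal cocone maps $(ν_D)_i$.

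The algebra map $\qwintro$ is produced by forming, for each $i:\Size$, the composite $(ν_D)_{\Suc{i}} ∘ τ_{i,\Suc{i}} ∘ σ ∘ S_Σ\,η : S_Σ D_i → \colim D$ and checking that these assemble into a cocone under $S_Σ ∘ D$; the nontrivial compatibility reduces to proving, for each $t:T_Σ\,D_i$ and $i<j$, the existence of some $k$ with $τ_{j,k}(T_Σ\,δ_{i,j}\,t) = τ_{i,k}\,t$, by induction on $t$, with the $σ(a,b)$-case using an upper bound for a $B\,a$-indexed family of sizes (available because $B\,a$ is one of the arities of the signature of $\Size$, see \Cref{rem:cocontinuity}). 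Then $\qwintro$ is this induced map composed with the inverse of the cocontinuity isomorphism from \Cref{cor:poly-cocontinuity}. For $\qwequate$, factor $ρ:V\,e → \colim D$ through some $ρ':V\,e → D_i$ using cocontinuity for powers by $V\,e$; standard bind identities reduce the claim to the equality of $τ_{i,\Suc{i}}(T_Σ\,ρ'\,(l\,e))$ and $τ_{i,\Suc{i}}(T_Σ\,ρ'\,(r\,e))$ inside $D_{\Suc{i}}$, which is exactly the first clause of $R_i$ combined with~\eqref{eq:fixed2}.

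The recursor $\qwrec$ is constructed by well-founded recursion (\Cref{prop:wfrec}) over sizes, producing a cocone $r_i:D_i → X$ governed by the formula $r_i([j,t]_{R_i}) \defeq t ⋙ r_j$. Respecting the relation $R_i$ is the critical point: the clause $qε$ uses the hypothesis that $(X,α)$ satisfies $ε$, while the clauses $qη$ and $qσ$ — algebra-coherence conditions forced by the use of $T_Σ$ rather than $S_Σ$ in the definition of $◇_i$ — force a strengthening of the recursion motive to a subtype $F_i$ of $D_i → X$ consisting of functions additionally satisfying $t ⋙ (f ∘ δ_{j,i}) = f([j,t]_{R_i})$. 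Well-founded induction then propagates this invariant to yield $r:∏_i F_i$, and the resulting cocone induces $\qwrec:\colim D → X$. Finally, $\qwrechom$ reduces to a per-size commuting square that falls out of the defining equation of $r_{\Suc{i}}$ together with the definitions of bind and of $S_Σ$'s action on functions, while $\qwuniq$ proceeds by well-founded induction, showing $h ∘ (ν_D)_i = r_i$ for any algebra morphism $h$, using only that $h$ preserves the algebra structure plus the identity $(t ⋙ (ν_D)_i) = (ν_D)_{\Suc{i}}(τ_{i,\Suc{i}}\,t)$.

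The hardest step I expect is the well-definedness of $\qwrec$ on the quotient $D_i$: the three disjuncts of $R_i$ each demand distinct arguments, and in particular the $qσ$ clause interacts delicately with the coherence between $S_Σ$ and its free monad $T_Σ$. Identifying the correct strengthening of the recursion motive — the subtype $F_i$ above — and verifying that it is preserved at each step of the well-founded recursion is what makes the argument go through, and is the single point where all of the design choices behind the fixed-point (the use of $T_Σ$ in $◇_i$, the three clauses of $R_i$) pay off simultaneously.
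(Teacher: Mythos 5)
Your proposal is correct and matches the paper's proof essentially step for step: the same composite for $\qwintro$ via a cocone under $S_Σ∘D$ and the cocontinuity isomorphism, the same factorization argument for $\qwequate$ using the first clause of $R_i$, the same well-founded recursion with the strengthened motive $F_i$ for $\qwrec$, and the same per-size commuting squares and well-founded induction for $\qwrechom$ and $\qwuniq$. You have also correctly identified the strengthening of the recursion hypothesis to $F_i$ as the crux of well-definedness on the quotient, which is exactly the point the paper emphasizes.
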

Now we can complete the proof of the main theorem:

\begin{proof}[Proof of non-indexed version of \Cref{thm:main}]
  In view of \Cref{prop:existence}, it suffices to construct a
  $\Size$-indexed $Σ$-structure which is a $◇$-fixed point in the sense
  of \Cref{def:fixed}.

  For each $i:\Size$, say that a $(↓ i)$-indexed $Σ$-structure
  (\Cref{def:sized-alg}) is an \emph{upto-$i$} $◇$-fixed point if
  \begin{equation}%
    \label{eq:upto}
    ∀j<i.\; D_j=◇_j(D\;(↓ j)) \;∧\; ∀ k<j.∀ t.\; τ_{k,j}\,t \heq [k,t]_{R_j}
  \end{equation}
  (cf.~\eqref{eq:fixed1} and~\eqref{eq:fixed2}).
  Note that:

  \begin{enumerate}[ref={(\Alph*)}, label={(\Alph*)}, widest=A]
    \item\label{iToJRestr} \emph{Given $j<i$, any upto-$i$ $◇$-fixed point
      restricts to an upto-$j$ $◇$-fixed point.}

    \item\label{fixedPointEq} \emph{For all $i$, any two upto-$i$ $◇$-fixed
      points are equal} (proof by well-founded induction~\eqref{eq:<iswf}).
  \end{enumerate}
  Using these two facts, it follows by well-founded recursion
  (\Cref{prop:wfrec}) that there is an upto-$i$ $◇$-fixed point for
  all $i:\Size$. For if $(D^{(j)}, τ^{(j)})$ is an upto-$j$ $◇$-fixed
  point for all $j<i$, then we get $D^{(i)} : {↓ i} → 𝓤$ by defining
  for each $j<i$
  \begin{equation}
    (D^{(i)})_j \defeq ◇_j(D^{(j)})
  \end{equation}
  If $k<j<i$, then by \cref{iToJRestr} we have that $D^{(j)}(↓ k)$ is an upto-$k$
  $◇$-fixed point and hence by \cref{fixedPointEq} that $D^{(j)}(↓ k) = D^{(k)}$.
  So together with~\eqref{eq:upto} this gives:
  \begin{equation}
    (D^{(i)})_k \defeq ◇_k(D^{(k)}) = ◇_k(D^{(j)}(↓ k)) = (D^{(j)})_k
  \end{equation}
  Hence we can define
  $(τ^{(i)})_{k,j} : T_Σ((D^{(i)})_k)→ (D^{(i)})_j$ by
  $(τ^{(i)})_{k,j}\,t \defeq [k, t]_{R_j}$ and this makes
  $(D^{(i)},τ^{(i)})$ into a $(↓ i)$-indexed $Σ$-structure which by
  construction is an upto-$i$ $◇$-fixed point. Thus by well-founded
  recursion (\Cref{prop:wfrec}) we have an upto-$i$ $◇$-fixed point
  $D^{(i)}$ for all $i:\Size$.

  Let $D:\Size→ 𝓤$ be given by $D_i \defeq ◇_i(D^{(i)})$. If $j<i$, then by
  \cref{iToJRestr,fixedPointEq} we have $D^{(i)}(↓ j)  = D^{(j)}$ and together
  with~\eqref{eq:upto} this gives:
  \begin{equation}
    D_j \defeq ◇_j(D^{(j)}) = ◇_j(D^{(i)}(↓ j)) = (D^{(i)})_j
  \end{equation}
  So we can define $τ_{j,i}:T_Σ\,D_j → D_i$ by
  $τ_{j,i}\,t \defeq [j, t]_{R_i}$. This makes $(D,τ)$ into a $Σ$-structure
  which by construction is a $◇$-fixed point.
\end{proof}

%%% Local Variables:
%%% mode: latex
%%% TeX-master: "../qw-journal"
%%% End:

\FloatBarrier
\section{Encoding QITs as QWI-types}%
\label{sec:qit_encoding}

The general notion of indexed quotient inductive type (QIT) was
discussed by example in the Introduction. We wish to show that a wide
variety of QITs can be expressed as QWI-types, namely those which do
not use conditional equality constructors (as in the example in~\eqref{alt-bag_qit}). We first introduce a schema for such QITs that
combines desirable features of ones that occur in the literature. As
in the previous section, for simplicity sake we will confine our
attention to the non-indexed case of QITs and QW-types.

\subsection{General QIT schemas and encodings}%
\label{sec:general-qit-schemas}

\begin{emergency}{3em}
\Textcite{basold2017higher} present a schema for infinitary QITs that
do not support conditional path equations. Constructors are defined by
arbitrary polynomial endofunctors built up using (non-dependent)
products and sums, which means in particular that parameters and
arguments can occur in any order; however, they require constructors
to be in uncurried form. They also construct a model of simple 1-cell
complexes and other non-recursive QITs.
\end{emergency}

\Textcite[Sections~3.1 and 3.2]{dybjer2018finitary} present a schema
for finitary QITs that does allow curried constructors (and also
supports \emph{conditional} path equations), but requires all
parameters to appear before all arguments. This contrasts with the
more convenient schema for regular inductive types in Agda, which
allows parameters and arguments in any order.

\Textcite{kaposi2019constructing} define an encoding of finitary
quotient inductive-inductive (QIIT) types, called the \emph{theory of
  signatures} (ToS), which is a restriction of the \emph{theory of
  codes} for HIITs \parencite{kaposi2018syntax}. In the ToS a QI(I)T
is encoded as a context of a small internal type theory. This encoding
is not quite as convenient as the schema for regular inductive types
in Agda, but, when using named variables, it is much closer to Agda
than QWI-types are. They also construct a model for finitary QIITs
(and \textcite{KaposiA:lariqi} reduce the problem of modelling
infinitary QIITs to just one “universal” QIIT – a generalised ToS).

Building on the first two schemas mentioned above, we provide a schema
for infinitary, non-conditional QITs combining the arbitrarily ordered
parameters and arguments of the first \parencite{basold2017higher}
with the curried constructors of the second
\parencite{dybjer2018finitary}.

%%%%%%%%%%%%%%%%%%%%%%%%%%%%%%%%%%%%%%%%%%%%%%%%%%%%%%%%%%%%%%%%%%%%%%
\begin{figure}[!htbp]
  \small
  \newcommand{\myvspace}{\vspace{0.75\baselineskip}}
  \begin{mdframed}
    \centering

    \begin{minipage}{\linewidth}
      In the following we fix the context \(Γ\) in which the QIT,
      \(Q\), (\(Q ∉ Γ\)) is defined, whereas \(Δ\), \(A\), \(B\),
      \(H\), \(K\), \(a\), \(b\), \(x\), \(y\), etc. are metavariables.
      When deriving an element or equality constructor, Δ will always
      be equal to Γ or an extension of it.
    \end{minipage}

    \myvspace

    \begin{minipage}{\linewidth}
      A type \emph{strictly-positive in \(Q\)} is made up with ∏-types
      and ∑-types and can use \(Q\) and constants or variables in the
      context, provided that \(Q\) never occurs on the LHS of a
      ∏-type, and that the RHS of a ∑-type never depends on \(Q\),
      even if it occurs on the LHS.
    \end{minipage}

    \myvspace

    \fbox{\(Δ ⊢ K \StrPstv\)}

    \myvspace

    \hfill
    \begin{prooftree}
      \hypo{Δ ⊢ B : 𝓤}
      \infer1[\ConstantParameter]{Δ ⊢ B \StrPstv}
    \end{prooftree}
    \hfill
    \begin{prooftree}
      \hypo{Δ ⊢ B : 𝓤}
      \hypo{Δ, b : B ⊢ K \StrPstv}
      \infer2[\StrictlyPositiveFunction]{Δ ⊢ ∏_{b : B} K \StrPstv}
    \end{prooftree}
    \hfill\vphantom{.}

    \myvspace

    \hfill
    \begin{prooftree}
      \infer0[\InductiveArgument]{Δ ⊢ Q \StrPstv}
    \end{prooftree}
    \hfill
    \begin{prooftree}
      \hypo{Δ ⊢ A \StrPstv}
      \hypo{Δ, a : A\subst{𝟙}{Q} ⊢ K \StrPstv}
      \infer2[\StrictlyPositiveProduct]{Δ ⊢ ∑_{a : A} K' \StrPstv}
    \end{prooftree}
    \hfill\vphantom{.}

    \myvspace

    \begin{minipage}{\linewidth}
      Where \(K'\) is found from \(K\) by recursively replacing each
      sub-term of type \(𝟙\), \(A → 𝟙\), etc. with the corresponding
      unique term \(0\), \(!\), etc.
    \end{minipage}

    \myvspace

    \begin{minipage}{\linewidth}
      \emph{%
        Note: We can see that these rules give strictly-positive types
        because: (1) at this point \(Δ ⊢ Q : 𝓤\) is not derivable
        since \(Q ∉ Γ\) and no rule for \(\StrPstv\) adds \(Q\) to the
        context, so \(Q\) can never appear in the LHS of a ∏-type.
        Also (2) the \(\StrictlyPositiveProduct\) rule ensures that
        abstracting with \(∑\) cannot result in dependencies of \(Q\)
        in the codomain.
      }
    \end{minipage}

    \myvspace

    \begin{minipage}{\linewidth}
      The type of an element constructor is an (iterated) ∏-type with
      codomain \(Q\), with zero or more strictly-positive arguments.
    \end{minipage}

    \myvspace

    \fbox{\(Δ ⊢ H \ElCnstr\)}

    \myvspace

    \hfill
    \begin{prooftree}
      \hypo{Δ, Q : 𝓤 ⊢ H \ElType}
      \infer1[\ElCon]{Δ ⊢ H \ElCnstr}
    \end{prooftree}
    \hfill\vphantom{.}

    \myvspace

    \hfill
    \begin{prooftree}
      \infer0[\Target]{Δ ⊢ Q \ElType}
    \end{prooftree}
    \hfill
    \begin{prooftree}
      \hypo{Δ\subst{𝟙}{Q} ⊢ K \StrPstv}
      \hypo{Δ, a : K' ⊢ H \ElType}
      \infer2[\ElArgumento]{Δ ⊢ ∏_{a : K'} H \ElType}
    \end{prooftree}
    \hfill\vphantom{.}

    \myvspace

    \begin{minipage}{\linewidth}
      Where \(Δ\subst{𝟙}{Q}\) is the context that results after
      replacing occurrences of \(Q\) in \(Δ\) by the type \(𝟙\).
    \end{minipage}

    \myvspace

    \begin{minipage}{\linewidth}
      \emph{%
        Note: Deriving \(Γ ⊢ ∏_{a : K'} Q \ElCnstr\) via
        \ElArgumento{} for some strictly-positive \(K\) does not imply
        \(Γ ⊢ ∏_{a : K'} Q : 𝓤\) since \(Γ ⊬ Q : 𝓤\) (since \(Q ∉
        Γ\)), and also \(Γ ⊬ K' : 𝓤\) in general since \(K'\) may
        contain \(Q\). This distinction also applies to the following
        judgements.
      }
    \end{minipage}

    \myvspace

    \begin{minipage}{\linewidth}
      The type of an equality constructor is an (iterated) ∏-type with
      codomain \(x =_Q y\), with zero or more strictly-positive
      arguments. Note that each of the element constructors are added
      to the context and can be used to derive \(x : Q\) and \(y :
      Q\).
    \end{minipage}

    \myvspace

    \fbox{\(Δ ⊢ K \EqCnstr\)}

    \myvspace

    \begin{prooftree}
      \hypo{Δ, Q : 𝓤, c₁ : C₁, ⋯, cₘ : Cₘ ⊢ K : \EqType}
      \infer1[\EqCon]{Δ ⊢ K \EqCnstr}
    \end{prooftree}

    \myvspace

    \begin{prooftree}
      \hypo{Δ ⊢ x : Q}
      \hypo{Δ ⊢ y : Q}
      \infer2[\EqTarget]{Δ ⊢ x = y \EqType}
    \end{prooftree}
    \hfill
    \begin{prooftree}
      \hypo{Δ\subst{𝟙}{Q} ⊢ K \StrPstv}
      \hypo{Δ, a : K' ⊢ H \EqType}
      \infer2[\EqArg]{Δ ⊢ ∏_{a : K'} H \EqType}
    \end{prooftree}

  \end{mdframed}

  \caption{Rules for QIT element and equality constructors.}%
  \label{qit-constructor-types}
\end{figure}
%% end figure %%%%%%%%%%%%%%%%%%%%%%%%%%%%%%%%%%%%%%%%%%%%%%%%%%%%%%%%

The following definition should be read as an extension of a
formalisation of the type theory described in \Cref{sec:type-theory}
in terms of typing contexts ($Γ,Δ,\ldots$) and various
judgements-in-context (such as $Γ ⊢ a : A$, $Γ ⊢ a = a' : A$, etc.);
see the HoTT Book~\parencite[Appendix]{HoTT}.

\begin{definition}\label{def_equational_QITs}
  \begin{emergency}{10em}
    A (\emph{non-conditional}) \emph{QIT}, \(Q\), in a context \(Γ\),
    \(Q ∉ Γ\) is specified by a list of element constructors \({c₁ :
    C₁,} …, {cₘ : Cₘ}\), followed by a list of equality constructors
    \({d₁ : D₁,} …, {dₙ : Dₙ}\), where the \(Cᵢ\)s and \(Dⱼ\)s are
    derived as \(Γ ⊢ Cᵢ\,\ElCnstr\) and \(Γ ⊢ Dⱼ\,\EqCnstr\)
    respectively, according to the rules in
    \cref{qit-constructor-types}.
  \end{emergency}

  Given a list of element and equality constructors built up according to these
  rules, the newly-defined QIT, \(Q\), has formation rule
  \[
    \begin{prooftree}
      \infer0{Γ ⊢ Q : 𝓤}
    \end{prooftree}
  \]
  for the specific context \(Γ\) in which it was defined. And it
  has an introduction rule for each element constructor and each equality
  constructor:
  \[
    \begin{prooftree}
      \infer0{Γ ⊢ c₁ : C₁}
    \end{prooftree}
    \quad\cdots\quad
    \begin{prooftree}
      \infer0{Γ ⊢ cₘ : Cₘ}
    \end{prooftree}
    \qquad\qquad
    \begin{prooftree}
      \infer0{Γ ⊢ d₁ : D₁}
    \end{prooftree}
    \quad\cdots\quad
    \begin{prooftree}
      \infer0{Γ ⊢ dₙ : Dₙ}
    \end{prooftree}
  \]
\end{definition}
We show how to derive elimination and computation rules for $Q$ from
these formation and introduction rules (compare with
\textcite[Definition~10]{basold2017higher}) after first giving an
example of how the schema is instantiated.

\begin{example}
  Consider multisets as in example~\eqref{bag_qit} with two element
  constructors \([]\) and \(\_∷\_\), and one equality constructor
  \(\swap\). Given the context containing parameter \(X : 𝓤\), we
  define the QIT \(·, X \mathbin: 𝓤 ⊢ \Bag \mathbin: 𝓤₁\) by
  providing those constructors along with their types such that they
  are derived by the rules in \cref{qit-constructor-types}:

  Let \(Γ ≝ (·, X : 𝓤)\) and \(Δ ≝ (Γ, \Bag : 𝓤) = (·, X : 𝓤, \Bag :
  𝓤)\); so that, by definition, \(Δ\subst{𝟙}{\Bag} = Γ\).
  \begin{itemize}
    \item \([]\)
      {\tiny
        \begin{center}
          \begin{prooftree}
            \infer0[\Target]{Δ ⊢ \Bag \ElType}
            \infer1[\ElCon]{Γ ⊢ \Bag \ElCnstr}
          \end{prooftree}
        \end{center}
      }

    \item \(\_∷\_\)
      {\tiny
        \begin{center}
          \begin{prooftree}
            \infer0{Γ ⊢ X : 𝓤}
            \infer1[\ConstantParameter]{Γ ⊢ X \StrPstv}
            \infer0[\InductiveArgument]{Γ, a : X ⊢ \Bag \StrPstv}
            \infer0[\Target]{Δ, a : X, b : \Bag ⊢ \Bag \ElType}
            \infer2[\ElArgumento]{Δ, a : X ⊢ (\Bag → \Bag) \ElType}
            \infer2[\ElArgumento]{Δ ⊢ (X → \Bag → \Bag) \ElType}
            \infer1[\ElCon]{Γ ⊢ (X → \Bag → \Bag) \ElCnstr}
          \end{prooftree}
        \end{center}
      }
      \vspace{0.5\baselineskip}
  \end{itemize}
  \newcommand{\eqn}{\mathit{eqn}}
  \hspace{\parindent}Let
  \begin{align*}
    {}  Θ &≝ Δ, [] : \Bag, \_∷\_ : X → \Bag → \Bag
    \\  Ξ &≝ Θ\subst{𝟙}{\Bag} ≝ Γ, [] : 𝟙, \_∷\_ : X → 𝟙 → 𝟙
    \\  Φ &≝ Θ, x\;y : X, \zs : \Bag
    \\  \eqn &≝ x ∷ y ∷ \zs = y ∷ x ∷ \zs
  \end{align*}
  \begin{samepage}
  \begin{itemize}
    \item \swap
  \end{itemize}
  {\tiny
    \begin{center}
      \begin{prooftree}
        \infer0{Ξ ⊢ X : 𝓤}
        \infer1{Ξ ⊢ X \StrPstv}
        \infer0{Ξ, x : X ⊢ X : 𝓤}
        \infer1{Ξ, x : X ⊢ X \StrPstv}
        \infer0{Ξ, x\;y : X ⊢ \Bag \StrPstv}
        \hypo{⋮}
        \infer1{Φ ⊢ x ∷ y ∷ \zs : \Bag}
        \hypo{⋮}
        \infer1{Φ ⊢ y ∷ x ∷ \zs : \Bag}
        \infer2[\EqTarget]{
          Φ ⊢ \eqn \EqType
        }
        \infer2[\EqArg]{
          Θ, x\;y : X ⊢ ∏_{\zs : \Bag}
            \eqn \EqType
        }
        \infer2[\EqArg]{
          Θ, x : X ⊢ ∏_{y : X} ∏_{\zs : \Bag}
            \eqn \EqType
        }
        \infer2[\EqArg]{
          Θ ⊢ ∏_{x\;y : X} ∏_{\zs : \Bag}
            \eqn \EqType
        }
        \infer1[\EqCon]{Γ ⊢ ∏{x\;y : X} ∏_{\zs : \Bag} \eqn \EqCnstr}
      \end{prooftree}
    \end{center}
  }
  \end{samepage}
\end{example}

\begin{definition}[Elimination and computation]
  The \emph{arguments} of a constructor \(Cⱼ\), written \(\Arg(Cⱼ)\),
  is the list of \(K'\) for all strictly-positive types \(K\)
  introduced with the rule \ElArgumento.

  Given a QIT defined by constructors \(c₁,…,cₘ,d₁,…,dₙ\) as above,
  the \emph{underlying inductive type} \(⌊Q⌋\) is the inductive type
  defined by only the element constructors \(c₁,…,cₘ\), ignoring the
  equalities; then the \emph{underlying W-type} is the W-type that
  encodes this inductive type. (Recall that every strictly-positive
  description of a polynomial endofunctor gives rise to a W-type with
  the same initial algebra; see \textcite{dybjer1997representing}, and
  for the more general indexed and nested case see
  \textcite{altenkirch2015indexed}.)

  Define \emph{leaf application} on a strictly-positive term \(a : A\)
  (that is, a term with a strictly-positive type) for a function \(f :
  ∏_{q : Q} R\,q\) into some type \(R\), written \(f \leafapp a\), by
  induction on the \StrPstv{} structure of \(A\):
  \begin{equation}
    \begin{aligned}
      {}& \InductiveArgument & f \leafapp x &≝ f\,x
      \\& \ConstantParameter & f \leafapp b &≝ b
      \\& \StrictlyPositiveProduct & f \leafapp (a, b) &≝ (f \leafapp a, f \leafapp b)
      \\& \StrictlyPositiveFunction & f \leafapp g &≝ (f \leafapp\_) ∘ g
    \end{aligned}
  \end{equation}
  In order to define the elimination and computation rules we must
  first define the induction step. First, given a ``motive''
  \(Γ ⊢ P : Q → 𝓤\), define the type \(P' ≝ ∑_{x : Q} P\,x\).
  Now given a strictly-positive argument \(A\), define the type \(A'\)
  (it will be (one of) the induction hypotheses) by induction on the
  structure of \(A\), replacing each occurrence of \(Q\)
  (\InductiveArgument) with \(P'\). Terms \(a' : A'\) are also
  strictly-positive and admit leaf application for functions of type
  \(∏_{p' : P'} S\,p'\) for some type \(S\).

  To find the induction step \(\wat{Cⱼ}\) for each element constructor
  \(cⱼ : Cⱼ\), replace each of the strictly-positive arguments \(∏_{a₁
  :A₁}\,⋯\,∏_{aₚ : Aₚ}\) with \(∏_{a₁ :A₁'}\,⋯\,∏_{aₚ : Aₚ'}\) , as
  defined above, and replace the target \(Q\) of the constructor with
  \(P\,(cⱼ\,(π₁ \leafapp a₁)\,⋯\,(π₁ \leafapp aₚ))\).

  The induction cases for each of the element constructors are then \(h₁
  : \wat{C₁}, …, hₘ : \wat{Cₘ}\), and these provide an eliminator
  \(\elim\,h₁\,⋯\,hₘ\) for the underlying inductive type \(⌊Q⌋\).

  Given \(h₁,…,hₘ\), define \(\wat{Dₖ}\), for each equality
  constructor \(dₖ : Dₖ\), in the same way. Replace the homogeneous
  equality type with a heterogeneous equality type and replace the
  endpoints \(l,r\) of the equality with \(\wat{l},\wat{r}\)
  inductively defined by cases depending on if the abstracted variable
  is a constructor or not:
  \begin{equation}
    \begin{aligned}
      {}\wat{cⱼ} &≝ hⱼ
      \\\wat{x} &≝ x : P'ᵢ
    \end{aligned}
  \end{equation}
  \begin{samepage}
  The elimination rule is then:
  \begin{center}
    \begin{prooftree}
      \hypo{Γ ⊢ P : Q → 𝓤}
      \infer[no rule]1{
        {}      Γ ⊢ h₁ : \wat{C₁}
        \quad   …
        \quad   Γ ⊢ hₘ : \wat{Cₘ}
      }
      \infer[no rule]1{
        {}      Γ ⊢ p₁ : \wat{D₁}
        \quad   …
        \quad   Γ ⊢ p_{\mathrlap{n}\phantom{m}}
        : \wat{D_{\mathrlap{n}\phantom{m}}}
      }
      \infer1[\QITelim]{\qitelim\,h₁\,⋯\,hₘ\,p₁\,⋯\,pₙ : ∏_{x : Q}
        P\,x}
    \end{prooftree}
    \vspace{0.4\baselineskip}
  \end{center}
  \end{samepage}
  Finally the computation rules are, for each element constructor \(cᵢ
  : Cᵢ\),
  \begin{center}
    \vspace{0.4\baselineskip}
    \begin{prooftree}
      \hypo{\text{\emph{same hypotheses as \QITelim}}}
      \hypo{Γ ⊢ a₁,…,aₚ : \Arg(Cᵢ)}
      \infer2[\QITcomp]{
        \qitelim\,h₁\,⋯\,hₘ\,p₁\,⋯\,pₙ
        \,(cᵢ\,a₁\,⋯\,aₚ)
      }
      \infer[no rule]1{
        =_{P\,(cᵢ\,a₁\,⋯\,aₚ)} hᵢ
        \,(a₁ , (\qitelim\,⋯) \leafapp a₁)
        \,⋯
        \,(aₚ , (\qitelim\,⋯) \leafapp aₚ)
      }
    \end{prooftree}
  \end{center}
\end{definition}

\begin{samepage}
\begin{example}
  \newcommand{\nil}{\mathit{nil}}
  \newcommand{\cons}{\mathit{cons}}
  \renewcommand{\resp}{\mathit{resp}}
  The elimination rule for \(X : 𝓤 ⊢ \Bag\) is:
  \begin{center}
    \footnotesize
    \vspace{0.4\baselineskip}
    \begin{prooftree}
      \hypo{
        {} Γ ⊢ P : \Bag → 𝓤
        \qquad  Γ ⊢ \nil : P\,[]
        \qquad  Γ ⊢ \cons : ∏_{x : X}
          ∏_{\xs' : \left(∑_{\xs : \Bag} P\,\xs\right)}
          P\,(x ∷ (π₁\;\xs'))
      }
      \infer[no rule]1{
        Γ ⊢ \resp :
          ∏_{x\;y : X} ∏_{\zs' : \left(∑_{\zs : \Bag} P\,\zs\right)}
          \cons\;x\;(y ∷ (π₁\;\zs'), \cons\;y\;\zs')
          ==
          \cons\;y\;(x ∷ (π₁\;\zs'), \cons\;x\;\zs')
      }
      \infer1{\mathsf{Bagelim}\,\nil\,\cons\,\resp :
        ∏_{x : \Bag} P\,x}
    \end{prooftree}
  \end{center}
  And it computes:
  \begin{equation*}
    \begin{aligned}
      {}&\mathsf{Bagelim}\,\nil\,\cons\,\resp\,[]
        &&=_{P\,[]} &&\nil
      \\&\mathsf{Bagelim}\,\nil\,\cons\,\resp\,(x ∷ \xs)
        &&=_{P\,(x ∷ \xs)} &&\cons\,x\,(\xs ,
        \mathsf{Bagelim}\,\nil\,\cons\,\resp\,\xs)
    \end{aligned}
  \end{equation*}
\end{example}
\end{samepage}

% We say that a type theory, or programming language, has equational indexed
% QITs if it has a rule schema that, when instantiated with a list of constructors
% as above, has the formation, introduction, elimination, and computation rules in
% \cref{def_equational_QITs}.

\subsection{From QIT to QWI-type}

We claim that any QIT \(Q\) in the sense of \Cref{def_equational_QITs}
can be constructed as the QW-type for a signature and equational
system derived from the declaration of the QIT;\@ and that the same
is true for the indexed version of \Cref{def_equational_QITs} and
QWI-types.  That is, QWI-types are universal for non-conditional QITs
in the same sense that W-types are for inductive types in a
sufficiently extensional type theory.

We saw above that the data $c_1:C_1,\ldots,c_m:C_m$ in
\Cref{def_equational_QITs} gives rise to an underlying inductive type
\(⌊Q⌋\) and hence to a W-type~\parencite{altenkirch2015indexed}, with
signature $Σ=(A,B)$.  The parameters and arguments of the equality
constructors $d_1:D_1,\ldots,d_n:D_n$ are also encoded in the same way
by a signature $(E, V)$. Then the endpoints of the equality
constructors can be encoded by the \(l\) and \(r\) arguments of an
equational system $ε=(E,V,l,r)$ in the sense of \Cref{def:syse}; the
encoding follows the structure of $\EqType$ judgements in
\Cref{qit-constructor-types}, using the $η$ constructor of
$T_Σ$~\eqref{eq:Tconstructors} for variables and the $σ$ constructor
for \(c₁, ⋯, cₘ\) introduced by \EqCon{} in \(Θ\). We illustrate the
encoding by example, beginning with the three examples from the
Introduction.

\begin{example}[\textbf{Finite multisets}]
  The element constructors of the QIT, $\Bag\,X$, of finite multisets
  over \(X : 𝓤\) in~\eqref{bag_qit} are encoded exactly as the W-type
  for List over \(X\): we take \(A:𝓤\) to be \(𝟙 + X\), where
  \(ι₁\,0\) corresponds to \([]\) and \(ι₂\,x\) corresponds to
  \(x ∷ \_\) for each \(x : X\). The arity of \([]\) is zero, and the
  arity of each \(x ∷ \_\) is one; so we take \(B : A → 𝓤\) to be the
  function mapping \(ι₁\,0\) to $𝟘$ and each \(ι₂\,x\) to~$𝟙$. The
  \swap{} equality constructor is parametrised by elements of
  \(E ≝ X × X\) and for each \((x, y) : E\), \(\swap\,(x,y)\) yields
  an equation involving a single free variable (called
  \(\zs : \Bag\,X\) in~\eqref{bag_qit}); so we define
  \(V ≝ λ\,\_·𝟙 : E → 𝓤\).  Each side of the equation named by
  \(\swap\,(x,y)\) is coded by an element of
  \(T_Σ\,(V\,(x,y)) = T_Σ\,𝟙\). Recalling the definition of \(T_Σ\)
  from~\cref{sec:w-types}, the single free variable, \(\zs\),
  corresponds to \(η\,0 : T_Σ\,𝟙\). Then the left-hand side of the
  equation, \(x ∷ y ∷ zs\), is encoded as
  \(σ\,(ι₂\,x, (λ\,\_·σ\,(ι₂\,y, (λ\,\_·η\,0))))\), and similarly the
  right-hand side, \(y ∷ x ∷ zs\), is encoded as
  \(σ\,(ι₂\,y, (λ\,\_·σ\,(ι₂\,x, (λ\,\_·η\,0))))\).

  So altogether, the encoding of \(\Bag\,X\) as a QW-type uses the
  non-indexed signature $Σ=(A,B)$ and equational system $ε=(E,V,l,r)$,
  where:
  \begin{align*}
    A &≝ 𝟙 + X & E &≝ X × X\\
    B (ι₁\,0) &≝ 𝟘 & V(x, y) &≝ 𝟙\\
    B(ι₂\,x)  &≝ 𝟙 &l (x, y)
    &≝ σ(ι₂\,x , (λ\,\_·\,σ(ι₂\,y , (λ\,\_·\,η\,0))))\\
    &&r (x, y) &≝ σ(ι₂\,y , (λ\,\_·\,σ(ι₂\,x , (λ\,\_·\,η\,0))))
  \end{align*}
\end{example}

\begin{example}[\textbf{Length-indexed multisets}]%
  \label{exa:length-indexed-multisets}
  The QWI-type
  encoding the QIT $\CommVec\,X$ of length-indexed multisets in~\eqref{abvec_qit} is an indexed version of the previous example,
  using the index type $I ≝ ℕ$. The indexed signature $Σ=(ℕ,A,B)$ has
  $A:𝓤^ℕ$ and $B:∏_{i:ℕ}(A_i→𝓤^ℕ)$ given by:
  \begin{align*}
    A_0   &≝ 𝟙\\
    A_{i+1} &≝ X\\
    B_0\,0\,j &≝ 𝟘\\
    B_{i+1}\,x\,j &≝ (i = j)
  \end{align*}
  The indexed system of equations $ε=(ℕ,E,V,l,r)$ over $Σ$ has $E:𝓤^ℕ$,
  $V:∏_{i:ℕ}(Eᵢ → 𝓤^ℕ)$ and $l,r:∏_{i:ℕ}∏_{e:Eᵢ}(T_Σ(Vᵢ\,e))ᵢ$ given
  by:
  \begin{align*}
    E_0 &≝ 𝟘\\
    E_1 &≝ 𝟘\\
    E_{i+2} &≝ X × X\\
    V_{i+2}\,(x,y)\,j &≝ (i = j)\\
    l_{i+2}\,(x,y) &≝ σ_{i+2}(x, λ\,\_·λ\,\,\mathsf{refl}·\,σ_{i+1}(y,
                     λ\,\_·λ\,\,\mathsf{refl}·\,η_i\,\mathsf{refl}))\\
    r_{i+2}\,(x,y) &≝ σ_{i+2}(y, λ\,\_·λ\,\mathsf{refl}·\,σ_{i+1}(x,
                     λ\,\_·λ\,\mathsf{refl}·\,η_i\,\mathsf{refl}))
  \end{align*}
  (We have used dependent pattern matching~\parencite{CoquandT:patmdt}
  to simplify the formulation of the above definitions;
  $\mathsf{refl}:x=x$ denotes the proof of reflexivity; $V_0$, $V_1$,
  $l_0$ and $l_1$ are uniquely determined.)
\end{example}

\begin{example}[\textbf{Unordered countably-branching trees}]
  The parameters of the \leaf{} and \node{} constructors for the QIT
  in~\eqref{infTree_qit} look like those for \Bag{}, except that the
  arity of  \node{} is ℕ rather than 𝟙. So we use the non-indexed
  signature $Σ=(A,B)$ with $A:𝓤$ and $B:A→ 𝓤$ given by:
  \begin{align*}
    A &≝ 𝟙 + X\\
    B (ι₁\,0) &≝ 𝟘\\
    B (ι₂\,x)  &≝ ℕ
  \end{align*}
  The $\perm$ equality constructor is parameterised by elements of
  $X × ∑_{b : ℕ → ℕ} \isIso\,b$.  For each element $(x,b,b')$ of that
  type, $\perm(x,b,b')$ yields an equation involving an
  $ℕ$-indexed family of variables (called $f:ℕ→\infTree\,X$ in~\eqref{infTree_qit}); so we take $V:E→𝓤$ to be $λ\,\_·\,ℕ$. Each side
  of the equation named by $\perm(x,b,b')$ is coded by an element of
  $T_Σ(V(x,b,b')) = T_Σ(ℕ)$.  The $ℕ$-indexed family of variables is
  represented by the function $η: ℕ→T_Σ(ℕ)$ and its permuted version
  by $η∘b$. Thus the left-{} and right-hand sides of the equation
  named by $\perm(x,b,b')$ are coded respectively by the elements
  $σ(ι₂\,x,η)$ and $σ(ι₂\,x,η∘b)$. Finally, the non-indexed system of
  equations over $Σ$ for the QW-type corresponding to the QIT in~\eqref{infTree_qit} is $ε=(E,V,l,r)$ with:
  \begin{align*}
    E &≝ X × ∑_{b : ℕ → ℕ} \isIso\,b \\
    V (x,b,b') &≝ ℕ\\
    l (x,b,b') &≝ σ(ι₂\,x,η)\\
    r (x,b,b') &≝ σ(ι₂\,x,η∘b)
  \end{align*}
\end{example}
This example is significant since, prior to the introduction of
QW-types~\parencite{fiore2020constructing}, none of the constructions
in the existing literature on subclasses of
QITs~\parencite{sojakova2015higher,swan2018wtypes,dybjer2018finitary}
or QIITs~\parencite{dijkstra2017quotient,kaposi2019constructing}
supported infinitary QITs.

\begin{example}[\textbf{W-suspensions}~\parencite{sojakova2015higher}]%
    \label{exa:wsusp}
    W-suspensions are also instances of QW-types. The data for one of
    them is: a type \(A' : 𝓤\) and a type family \(B' : A' → 𝓤\),
    (that is, the same data as a W-type), together with a type
    \(C' : 𝓤\) and functions \(l', r' : C' → A'\) that express a
    restricted subset of equalities. The equivalent QW-type is:
    \begin{align*}
        {} A &≝ A' & E &≝ C'
            & l &≝ λ\,c · σ\,(l'\,c, η)
        \\ B &≝ B' & V &≝ λ\,c·B'\,(l'\,c) × B'\,(r'\,c)
            & r &≝ λ\,c · σ\,(r'\,c, η)
    \end{align*}
\end{example}

\begin{example}[\textbf{W-types with reductions}~\parencite{swan2018wtypes}]
  W-types with reductions are further instances of QWI-types. The data
  for such a type, using our notational conventions, is:
  \(Z : 𝓤, Y : 𝓤^Z, X : Y ⥗ 𝓤^Z\), together with, for all \(z : Z\) a
  reindexing map \(R_z : ∏_{y : Y_z} (X_{z}\,y)_z\). A reduction
  identifies a term \(σ_z\,(y,α)\) for the signature $(Z,Y,X)$ at
  index \(z\) with the argument that was used to construct that term
  at the re-indexing map, namely \(α_z\,(R_z\,y)\). We claim that the
  equivalent QWI-type is given by:
    \begin{align*}
        {} I &≝ Z & A &≝ Y & E &≝ Y & l &≝ λ\,z · λ\,y · σ_z\,(y, η)
        \\ && B &≝ X & V &≝ X & r &≝ λ\,z · λ\,y · η_z\,(R_z\,y)
    \end{align*}
\end{example}

% \parencite{dybjer2018finitary} construction is considered in
% \cref{sec:qit_encoding}. Although not giving any construction,
% \parencite{basold2017higher} schema is also supported, see
% \cref{sec:qit_encoding}.

%\subsection{The ``F-type''}

\begin{example}[\textbf{Blass, Lumsdaine and Shulman's HIT}~\parencite{lumsdaine2019semantics}]%
  \label{blass_lumsdaine_shulman_type}

  As discussed in \cref{sec:wisc}, \textcite[Section~9]{BlassA:worfac}
  shows that, provided a certain large cardinal axiom is consistent
  with ZFC, then there is an infinitary equational theory with no
  initial algebra in ZF\@.  \textcite[Section~9]{lumsdaine2019semantics}
  adapt this theory into a higher-inductive type, called \(F\),
  that cannot be proved to exist in ZF, and hence cannot be
  constructed in type theory just using pushouts and natural
  numbers. Nevertheless it can be constructed in the type theory of
  \Cref{sec:type-theory} plus WISC, because of \Cref{thm:main} and the
  fact that \(F\) can be expressed as a QWI-type.

  The type \(F\) can be thought of as a set of notations for countable
  ordinals. Its definition requires some setup.

  Fix the bijection \([\even,\odd] : ℕ + ℕ → ℕ\) with inverse \(\eoinv\). Fix
  another bijection \(\pair : ℕ × ℕ → ℕ\) with inverse \(⟨\unpaira,\unpairb⟩\).
  Given \(f, g : ℕ → A\), write \(f∪g : ℕ → A\) for the composite
  \([f,g]∘\eoinv\); given \(a : A\), write \(\{a\} : ℕ → A\) for the constant
  function at \(a\).

  Now \(F\) is defined by the three point constructors,
  \begin{equation*}
    \begin{aligned}
      {} 0 &: F
      \\ S &: F → F
      \\ \fsup &: (ℕ → F) → F
    \end{aligned}
  \end{equation*}
  and the five path constructors,
  \begin{equation*}
    \begin{aligned}
      {}& \fsup\,\{0\} = 0
      \\
      {}& ∏_{f,g : ℕ → ℕ}
        \mathsf{E}\,f\,g
      → ∏_{h : ℕ → F} \fsup\,(h∘f) = \fsup\,(h∘g)
      \\& ∏_{f,g : ℕ → F} \fsup\,(f ∪ \{\fsup\,(f ∪ g)\})
      = \fsup\,(f ∪ g)
      \\& ∏_{f,g : ℕ → F} \fsup\,(f ∪ \{S\,(\fsup\,(f ∪ g))\})
      = S\,(\fsup\,(f ∪ g))
      \\& ∏_{b,c : ℕ → ℕ} ∏_{L : ℕ → ℕ → ℕ}
      \jointsurj\,b\,c → \Lrel\,L\,b\,c → \Lrel\,L\,c\,b
      % → \fsup\,\left(ƛn·h_{\unpaira\,n}\,(b\,(\unpairb\,n))\right)
      % = \fsup\,\left(ƛn·h_{\unpaira\,n}\,(c\,(\unpairb\,n))\right)
      → ∏_{h : ℕ → F} \fsup\,\left(\iter\,h\,b\right)
      = \fsup\,\left(\iter\,h\,c\right)
    \end{aligned}
  \end{equation*}
  where
  \begin{equation*}
    \begin{aligned}
      \mathsf{E}\,f\,g &≝ ∏_{n : ℕ}\left( ∑_{m : ℕ} f(m) = n \right)
        ↔ \left( ∑_{m : ℕ} g(m) = n \right)\\
        {}  \jointsurj\,b\,c &≝{}
        ∏_{n : ℕ} \left( ∑_{m : ℕ} b(m) = n \right) ∨ \left( ∑_{m : ℕ} c(m) = n \right)
      \\  \Lrel\,L\,x\,y &≝ ∏_{n : ℕ} ∑_{m,l : ℕ}
      \left( L\,(x\,m)\,l = y\,n \right)
      \\  \iter\,h\,p\, &≝ ƛ n · h_{(\unpaira\,n)}\,(p\,(\unpairb\,n))
      \\  h₀ &≝ h
      \\  h_{(x + 1)} &≝ ƛ y · \fsup\,(hₓ ∘ (L\,y))
    \end{aligned}
  \end{equation*}
  This can be expressed as a QW-type as follows:
  %
  % A function from an \(n\)-ary disjunction \(X₁ + … + Xₙ\) into \(Y\) which is
  % defined by case analysis and is constant on any data (i.e., depends only on
  % the position) will be written as a list \([y₁,…,yₙ]\) for each case \(yᵢ :
  % Y\).
  %\begin{fleqn}
    \begin{align*}
      {}  A &≝ 𝟛
      \\  B &≝ [𝟘,𝟙,ℕ]
      \\  E &≝ 𝟙 + \left(∑_{f,g : ℕ → ℕ}\mathsf{E}\,f\,g\right)
              + 𝟙 + 𝟙 + \left(∑_{b,c : ℕ → ℕ} ∑_{L : ℕ → ℕ → ℕ}
              \left(\jointsurj\,b\,c × \Lrel\,L\,b\,c × \Lrel\,L\,c\,b
              \right)\right)
      \\  V &≝ [𝟘, ℕ, ℕ + ℕ, ℕ + ℕ, ℕ]
    \end{align*}
    % \begin{equation*}
    %   \begin{array}{@{}r@{}l@{}c@{}l@{}c@{}l@{}c@{}l@{}c@{}l@{}c@{}}
    %     {}  E &{}≝{} &𝟙 &{}+{} &∑_{(f,g) : (ℕ → ℕ)^2}\mathsf{E}\,f\,g &{}+{} &𝟙 &{}+{} &𝟙
    %     &{}+{} &\displaystyle ∑_{b,c : ℕ → ℕ} ∑_{L : ℕ → ℕ → ℕ}
    %              \left( \jointsurj\,b\,c × \Lrel\,L\,b\,c × \Lrel\,L\,c\,b \right)
    %     \\  V &{}≝{} [&𝟘, &&ℕ, &&ℕ + ℕ, &&ℕ + ℕ, &&ℕ]
    %   \end{array}
    % \end{equation*}
    \begin{equation*}
      \begin{array}{@{}l@{\,}l@{}r@{}l@{}}
        {}  l & (ι₁\,\_) &{}≝{}
        &   σ\,(2,ƛ\_·σ\,(0,!))
        \\  r & (ι₁\,\_) &{}≝{}
        &   σ\,(0,!)
        \\[\jot]  l & (ι₂\,f\,g) &{}≝{}
        &   σ\,(2,(η∘f))
        \\  r & (ι₂\,f\,g) &{}≝{}
        &   σ\,(2,(η∘g))
        \\[\jot]  l & (ι₃\,\_) &{}≝{}
        &   σ\,(2,((η ∘ ι₁) ∪ ƛ\_· σ\,(1,((η ∘ ι₁) ∪ (η ∘ ι₂)))))
        \\  r & (ι₃\,\_) &{}≝{}
        &   σ\,(2,((η ∘ ι₁) ∪ (η ∘ ι₂)))
        \\[\jot]  l & (ι₄\,\_) &{}≝{}
        &   σ\,(2,((η ∘ ι₁) ∪ ƛ\_·σ\,(1,σ\,(2,((η ∘ ι₁) ∪ (η ∘ ι₂))))))
        \\  r & (ι₄\,\_) &{}≝{}
        &   σ\,(1,σ\,(2,((η ∘ ι₁) ∪ (η ∘ ι₂))))
        \\[\jot]  l & (ι₅\,b\,c\,L\,\_\,\_\,\_) &{}≝{}
        &   σ\,(2,ƛn· k\,L\,(\unpaira\,n)\,(b\,(\unpairb\,n)))
        \\  r & (ι₅\,b\,c\,L\,\_\,\_\,\_) &{}≝{}
        &   σ\,(2,ƛn· k\,L\,(\unpaira\,n)\,(c\,(\unpairb\,n)))
      \end{array}
    \end{equation*}
    where
    \begin{equation*}
      \begin{array}{@{}l@{}c@{}l@{}}
        {}  k\,L\,0\, &{}≝{}& η
        \\  k\,L\,(x + 1) &{}≝{}& ƛ y · σ\,(2,((k\,L\,x) ∘ (L\,y)))
      \end{array}
    \end{equation*}
  %\end{fleqn}
  % The rules for the \(F\)-type can be recovered from this QW-type via a
  % direct decoding.
\end{example}

% \subsection{Related work}

%%% Local Variables:
%%% mode: latex
%%% TeX-master: "../qw-journal"
%%% End:
%%% vim: textwidth=70 : colorcolumn=71 : expandtab : tabstop=2 : shiftwidth=2 : softtabstop=2

\FloatBarrier
\section{Conclusion}%
\label{sec:conclusion}

QWI-types are a general form of indexed quotient inductive types that
capture many examples, including simple 1-cell complexes and
non-recursive QITs~\parencite{basold2017higher}, non-structural
QITs~\parencite{sojakova2015higher}, W-types with
reductions~\parencite{swan2018wtypes} and also infinitary QITs
(e.g.~unordered infinitely branching
trees~\parencite{altenkirch2016type} and the type $F$
of \textcite{BlassA:worfac}, \textcite{lumsdaine2019semantics}).  We
have shown that it is possible to construct any QWI-type, even
infinitary ones, in the extensional type theory of toposes with
natural numbers object and universes satisfying the WISC Axiom.
(Indeed our Agda development shows that the construction is also
possible in weaker, more intensional type theories.)

We conclude by
mentioning related work and some possible directions for future work.

\subsection*{Reduction of QWI to QW}
WI-types (indexed W-types) can be constructed from \mbox{W-types}:
see~\parencite[Theorem~12]{GambinoN:welftd} and
\parencite[Section~7.1]{altenkirch2015indexed}. Our main
\Cref{thm:main} implies indirectly that in the presence of WISC,
QWI-types can be constructed from QW-types (since quotients and
W-types are instances of the latter). We do not know whether there is
a direct reduction of QWI-types to QW-types without any further
axioms, extending the reduction of WI-types to W-types.

\subsection*{Conditional path equations}

In \Cref{sec:general-qit-schemas} we mentioned the fact that
\textcite{dybjer2018finitary} give a schema for finitary 1-HITs (and
2-HITs) in which path constructors are allowed to take arguments
involving the identity type of the datatype being declared. To cover
this case, one could consider generalising QWI-types by replacing the
role played by infinitary equational theories by theories whose axioms
are infinitary Horn clauses, that is, equations conditioned by
(possibly infinite) conjunctions of equations. In fact, we have
already extended \Cref{thm:main} to cover this case, and this will
likely appear in the third author's upcoming thesis.

\subsection*{Quotients of monads}

\Cref{thm:main} gives a construction of initial algebras for
equational systems on the \emph{free} monad $T_Σ$ generated by a
signature $Σ$. By a suitable change of signature (see
\Cref{remarkFreeAlgebras}) this extends to a construction of free
algebras, rather than just initial ones.  One can show that the
construction works not just for free monads, but for more general ones
where the underlying endofunctor is well-behaved (satisfies some
suitable colimit preservation), such as the quotient monads one
obtains from systems of equations over signatures.  Given such a
construction one gets a quotient monad morphism from the base monad to
the quotient monad. This contravariantly induces a forgetful functor
from the algebras of the latter to those of the former.  Using the
adjoint triangle theorem, one should be able to construct a left
adjoint. This would then cover examples such as the free group over a
monoid, free ring over a group, etc.

\subsection*{Quotient inductive-inductive types}

We do not know whether our analysis of QITs using quotients, inductive
types and a well-founded notion of size can be extended to cover the
notion of \emph{quotient inductive-inductive} type
(QIIT)~\parencite{AltenkirchT:quoiit,kaposi2019constructing}.
\textcite{dijkstra2017quotient} studies such types in depth and in
Chapter~6 of his thesis gives a construction for finitary ones in
terms of countable colimits, and hence in terms of countable
coproducts and quotients. One could hope to pass to the infinitary
case by using well-founded sizes as we have done, provided an analogue
for QIITs can be found of the construction in \Cref{sec:construction}
for our class of QITs, the QWI-types.
\Textcite{kaposi2019constructing,KaposiA:lariqi} give a specification
of QIITs using a domain-specific type theory called the \emph{theory
of signatures} (ToS) and, assuming the ToS exists, prove existence of
all QIITs matching this specification. It might be possible to encode
their ToS using QWI-types (it is already an example of a QIIT), or to
extend QWI-types to a notion of ``inductive-inductive QW-type'' that
makes this possible. Together this would provide a construction of all
QIITs.

\subsection*{Homotopy Type Theory (HoTT)}

In this paper we have used an extensional type theory. Our Agda
development is more intensional, but nevertheless makes use of the
Uniqueness of Identity Proofs (UIP) axiom, which is well-known to be
incompatible with the Univalence
Axiom~\parencite[Example~3.1.9]{HoTT}. Given the interest in HoTT, it
is certainly worth investigating whether an analogue of
\Cref{thm:main} holds in some model of univalent foundations.

\subsection*{Pattern matching for QITs and HITs}

Our reduction of QITs to quotients and inductive types in the presence
of WISC is of foundational interest. For applications, one could wish
for direct support in systems like Agda, Coq and Lean for the very
useful notion of quotient inductive type, or more generally, for
higher inductive types. Even having better support for the special
case of quotient types would be welcome.  It is not hard to envisage
the addition of a general schema for declaring QITs; but when it comes
to defining functions on them, having to do that with eliminator forms
rapidly becomes cumbersome (for example, for functions of several QIT
arguments). Some extension of dependently-typed pattern matching to
cover equality constructors as well as element constructors is
needed. In this context it is worth mentioning that the
\texttt{cubical} features of recent versions of Agda give access to
cubical type theory~\parencite{MortbergA:cubadt}. This allows for easy
declaration of HITs (and hence in particular QITs) and a certain amount
of pattern matching when it comes to defining functions on them: the
value of a function on a path constructor can be specified by using
generic elements of the interval type in point-level patterns; but
currently the user is given little mechanised assistance to solve the
definitional equality constraints on end-points of paths that are
generated by this method.

%%% Local Variables:
%%% mode: latex
%%% TeX-master: t
%%% End:

\FloatBarrier
\section*{Acknowledgement}

We would like to acknowledge the contribution Ian Orton made to the
initial development of the work described here; he and the first
author supervised the third author's Master's thesis in which the
notion of a QW-type was first introduced. The second author would also
like to thank Paul Blain Levy for discussions about the various forms
of the \emph{weakly initial sets of covers} axiom
(see~\Cref{rem:global-wisc}); and Andrew Swan for discussions about
the role of that axiom in constructive proofs of cocontinuity for
polynominal endofunctors (see~\Cref{cor:poly-cocontinuity}).

\FloatBarrier
\begin{emergency}{2em} % avoid overfull lines due to URLs in the references
\printbibliography%
\end{emergency}

\end{document}